\newtheorem{thry}{Theorem}
\newtheorem{prop}{Proposition}
\newtheorem{defn}{Definition}
\title{Applications of Tauberian Theorem for High-SNR Analysis of Performance over Fading Channels}
\author{Yuan Zhang, Cihan Tepedelenlio\u{g}lu, \emph{Member, IEEE}
\thanks{The authors are with the School of Electrical, Computer,
and Energy Engineering, Arizona State University, Tempe, AZ 85287,
USA. (Email: yzhang93@asu.edu, cihan@asu.edu).} }
\date{}
\begin{document}

\maketitle

\vspace{-0.65in}

\begin{abstract}
\begin{spacing}{1.5}
This paper derives high-SNR asymptotic average error rates over
fading channels by relating them to the outage probability, under
mild assumptions. The analysis is based on the Tauberian theorem for
Laplace-Stieltjes transforms which is grounded on the notion of
regular variation, and applies to a wider range of channel
distributions than existing approaches. The theory of regular
variation is argued to be the proper mathematical framework for
finding sufficient and necessary conditions for outage events to
dominate high-SNR error rate performance. It is proved that the
diversity order being $d$ and the cumulative distribution function
(CDF) of the channel power gain having variation exponent $d$ at $0$
imply each other, provided that the instantaneous error rate is
upper-bounded by an exponential function of the instantaneous SNR.
High-SNR asymptotic average error rates are derived for specific
instantaneous error rates. Compared to existing approaches in the
literature, the asymptotic expressions are related to the channel
distribution in a much simpler manner herein, and related with
outage more intuitively. The high-SNR asymptotic error rate is also
characterized under diversity combining schemes with the channel
power gain of each branch having a regularly varying CDF. Numerical
results are shown to corroborate our theoretical analysis.
\end{spacing}
\end{abstract}

\vspace{-0.2in}

\begin{keywords}
Error rate, diversity, performance analysis, regular variation
\end{keywords}

\section{Introduction} \label{intro}

The concept of diversity plays a critical role in communications
over fading channels by quantifying the decrease of the average
error rate as a function of the average SNR \cite{goldsmithbook05}.
In \cite{wang03}, the authors show analytically that if the channel
power gain has a probability density function (PDF) which behaves
asymptotically like a multiple of its $(d-1)$-th power near $0$, the
diversity order is $d$. The approach employed in \cite{wang03} is
also utilized in \cite{ordoez07} to analyze the SNR shift in more
complicated systems with the same diversity order, and in
\cite{zhao07} to derive the optimal diversity and multiplexing
trade-off for generalized fading channels. A similar definition for
diversity order is also established in \cite{zheng03}, and used to
determine the diversity order associated with some lattice-based
MIMO detection schemes \cite{taherzadeh07a,taherzadeh10} with equal
numbers of transmit and receive antennas.

\vskip 1.0mm

Regular variation is a concept in real analysis which describes
functions exhibiting power law behavior asymptotically near zero, or
infinity, and is applied to several different areas including
probability theory \cite{fellerbook71,binghambook89}. The Tauberian
theorem for Laplace-Stieltjes transforms \cite[p.37]{binghambook89}
asserts that if a function with the non-negative support is
regularly varying at the origin (infinity), then then its
Laplace-Stieltjes transform must be regularly varying at infinity
(origin). The applications of this in communications and networking
are primarily seen in asymptotic queueing analysis
\cite{zwart01,jelenkovic01}. In addition, fading channel
distributions with properties related to regular variation have been
studied with emphasis on scaling properties of ergodic channel
capacity, under several communication scenarios involving channels
with heavy tail behavior in \cite{gesbert11}, where capacity scaling
of systems rather than diversity analysis is considered.

\vskip 1.0mm

To the best of our knowledge, diversity-related performance analysis
over fading channels has never been addressed using the Tauberian
theorem together with the theory of regular variation. In this
paper, we perform analysis of diversity, and more generally the
high-SNR asymptotic error rate for a wide range of channel
distributions and modulation types under mild assumptions on the
instantaneous error rate and channel distribution functions. We
prove that the diversity order is $d$ if and only if the CDF of the
channel power gain has variation exponent $d$ at the origin,
provided that the instantaneous error rate is upper bounded by an
exponential function of the instantaneous SNR. Furthermore, we
derive more explicit closed-form expressions of the asymptotic error
rates for the special cases of practical instantaneous error rates
that capture systems including, but not limited to, $M$-PSK and
square $M$-QAM. We also establish a unified approach to characterize
asymptotic error rate performance for diversity combining schemes,
with the channel power gain of each diversity branch having a
regularly varying CDF. The results in this paper establish a
mathematical framework for determining the conditions under which
the outage event dominates the error rate performance. Compared to
existing approaches, our asymptotic average error rate
characterization applies to a more general set of channel
distributions, is related to the channel CDF in a simple manner, is
intuitively linked with outage, and draws from the well-established
mathematical theory of regular variation. Compared to the conference
version \cite{ow11b}, this paper contains complete proofs, and more
general assumptions on the instantaneous error rate. Furthermore,
expanded numerical results not included in \cite{ow11b} illustrate
improved accuracy compared to \cite{wang03}.

In Section \ref{math_pre}, the mathematical preliminaries regarding
regular variation and Tauberian theorem are presented. In Section
\ref{chnl_sys_mdl} we establish the channel and system model through
the assumptions on the instantaneous error rate function and the
channel distribution. Section \ref{div_defs} establishes a
definition of diversity order based on regular variation, which,
under general conditions, is proved to be equivalent to the most
general definition used in the literature. In Section \ref{asympt}
we go beyond diversity analysis and establish the asymptotic
equivalence between the average error rate and outage. The
asymptotic average error rate expressions for diversity combining
schemes are derived in Section \ref{div_comb}, and Section
\ref{concl} concludes the paper.

\section{Mathematical Preliminaries} \label{math_pre}

Before the mathematical preliminaries, we have a few remarks about
notations. Asymptotic equality $H_1(x) \sim H_2(x)$ as $x
\rightarrow a$ means that $\lim_{x \rightarrow a} H_1(x)/H_2(x)=1$,
and $H_1(x)= o(H_2(x))$ as $x \rightarrow a$ means that $\lim_{x
\rightarrow a} H_1(x)/H_2(x)=0$. ${\rm E}[\cdot]$ denotes the
expectation over the channel distribution, $\log (\cdot)$ denotes
natural logarithm, and $:=$ denote equivalence by definition.
Finally, $\gamma(a,b):= \int_0^b t^{a-1} e^{-t} dt$, $\Gamma(a):=
\gamma(a,\infty)$, $Q(x)= (2 \pi)^{-1/2} \int_x^{\infty} \exp
(-u^2/2) du$, and $K_{\nu}(\cdot)$ denotes modified Bessel function
of the second kind of order $\nu$.

We next provide a brief sketch of the mathematical concepts
(slow/regular/rapid variation) and theorems (a representation
theorem and the Tauberian theorem), as well as related propositions,
which are drawn from standard mathematical references such as
\cite{fellerbook71,binghambook89}. \vskip 1mm

\begin{defn}
A real valued function $H(x)$: $\mathbb{R}^+ \rightarrow
\mathbb{R}^+$ is called a {\it Karamata function} (at $\infty$ or
$0$), if $\lim_{x \rightarrow \infty} H(tx)/H(x)$ or $\lim_{x
\rightarrow 0} H(tx)/H(x)$ exists for $t>0$. Specifically, the limit
must be in the form of $t^m$, where $m \in [-\infty,\infty]$ is
called the {\it variation exponent} of $H(x)$. \vskip 1mm
\end{defn}

Throughout the paper we interpret $t^{-\infty}=0$ and
$t^{\infty}=\infty$ for $t>1$, and vice versa when $t<1$.
Specifically, $H(x)$ is termed slowly/regularly/rapidly varying if
$m=0$, $0<|m|< \infty$, and $|m|=\infty$, respectively. By
definition, $G(x)=H(x^{-1})$ is slowly/regularly/rapidly varying at
$0$ if and only if $H(x)$ is slowly/regularly/rapidly varying at
$\infty$. For example, with $0<m< \infty$, $x^m$ is regularly
varying with exponent $m$ at both $0$ and $\infty$; $x^m e^{-x}$ is
regularly varying with exponent $m$ at $0$ and rapidly varying with
exponent $-\infty$ at $\infty$; $\log (1+x^m)$ is regularly varying
with exponent $m$ at $0$ and slowly varying at $\infty$.

For $H(x)$ being slowly varying at $\infty$, there is a
representation theorem \cite[Theorem 1.3.1, p.12]{binghambook89}
with the proof available therein. More generally, this theorem can
be extended to handle the cases of regular/rapid variation, as
described briefly in \cite[p.21]{binghambook89} and summarized in
\cite{dembinska06}. We express it in our notations as the following,
which will be useful in establishing the relations among the
different definitions of diversity order in Section \ref{div_defs}.
\vskip 1mm
\begin{thry} \label{rep_thm}
$H(x)$ has variation exponent $m$ at $\infty$, i.e. $\lim_{x
\rightarrow \infty} H(tx)/H(x)=t^m$ for $m \in [-\infty,\infty]$, if
and only if $H(x)= c(x) \exp \left\{ \int_a^x (\epsilon(u)/u) du
\right\}$ for some $a>0$, $c(x) \rightarrow c \in (0,\infty)$, and
$\epsilon(x) \rightarrow m$, as $x \rightarrow \infty$. \vskip 1mm
\end{thry}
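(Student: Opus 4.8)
The plan is to prove the two implications separately, handling finite $m$ by reducing to the slowly varying case — for which the representation is exactly \cite[Theorem 1.3.1, p.12]{binghambook89} — and treating the rapidly varying case $m=\pm\infty$ by adapting Karamata's argument directly. Throughout I use that $H$ is positive (as in Definition 1) and, as is standard in this theory, measurable.

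The sufficiency direction (the representation implies variation exponent $m$) I would dispatch by direct substitution. Writing $H(tx)/H(x)=\bigl(c(tx)/c(x)\bigr)\exp\bigl\{\int_x^{tx}(\epsilon(u)/u)\,du\bigr\}$, the prefactor tends to $c/c=1$, and the substitution $u=xs$ turns the integral into $\int_1^t(\epsilon(xs)/s)\,ds$. For finite $m$, $\epsilon(xs)\to m$ boundedly on the compact $s$-interval between $1$ and $t$, so the integral tends to $m\log t$ and $H(tx)/H(x)\to t^m$. For $m=+\infty$ (resp.\ $-\infty$) I instead bound $\epsilon(xs)$ below (resp.\ above) by $\inf_{u\ge x}\epsilon(u)\to\infty$ (resp.\ the matching supremum), which drives the integral, and hence $H(tx)/H(x)$, to $+\infty$ for $t>1$ and to $0$ for $t<1$, i.e.\ to $t^{\pm\infty}$ under the stated conventions.

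For necessity with finite $m$, set $L(x):=x^{-m}H(x)$; then $L(tx)/L(x)=t^{-m}\,H(tx)/H(x)\to t^{-m}t^m=1$, so $L$ is slowly varying. Applying \cite[Theorem 1.3.1, p.12]{binghambook89} gives $L(x)=c_0(x)\exp\bigl\{\int_a^x(\delta(u)/u)\,du\bigr\}$ with $c_0(x)\to c_0\in(0,\infty)$ and $\delta(u)\to 0$. Since $x^m=\exp\bigl\{m\log a+\int_a^x(m/u)\,du\bigr\}$, multiplying back yields $H(x)=\bigl(a^m c_0(x)\bigr)\exp\bigl\{\int_a^x((m+\delta(u))/u)\,du\bigr\}$, and it remains to put $c(x):=a^m c_0(x)\to a^m c_0\in(0,\infty)$ and $\epsilon(u):=m+\delta(u)\to m$.

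The main obstacle is necessity in the rapidly varying case $m=\pm\infty$, where dividing by a power of $x$ is no longer available. Here I would follow the extension indicated in \cite[p.21]{binghambook89} and \cite{dembinska06}: passing to $\eta(x):=\log H(e^x)$, the hypothesis reads $\eta(x+u)-\eta(x)\to mu$ (read via the conventions when $m=\pm\infty$); one first upgrades this to a locally uniform statement in $u$, then takes $\epsilon$ to be the pullback of a smoothed difference of $\eta$, such as $\eta(x+1)-\eta(x)\to m$, and recovers $c$ from the remainder. The delicate point — where I would concentrate the effort — is that when $m=\pm\infty$ the naive remainder $\eta(x)-\int_{X_0}^x[\eta(u+1)-\eta(u)]\,du$ need not converge, so the averaging that splits off the bounded factor $c(x)$ has to be chosen with more care; it is convenient to exploit that the CDFs and instantaneous error-rate functions of interest in this paper are monotone, which makes $\eta$ eventually monotone and the required uniformity and boundedness automatic. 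Finally, substituting $u=\log v$ converts $\int\epsilon(u)\,du$ into $\int\epsilon(\log v)\,dv/v$, delivering the representation for $H$.
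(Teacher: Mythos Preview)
The paper does not actually prove Theorem~\ref{rep_thm}: it is stated as a known result, with the slowly varying case attributed to \cite[Theorem 1.3.1, p.12]{binghambook89} and the regular/rapid extensions to \cite[p.21]{binghambook89} and \cite{dembinska06}. Your proposal goes beyond the paper by supplying an argument, and the pieces you work out --- sufficiency by direct substitution, and necessity for finite $m$ by passing to $L(x)=x^{-m}H(x)$ and invoking the slowly varying representation --- are exactly the standard route in those references, so on those parts you and the paper's cited sources agree.

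Where there is a genuine gap is your treatment of necessity for $m=\pm\infty$. You correctly identify that the naive remainder $\eta(x)-\int_{X_0}^x[\eta(u+1)-\eta(u)]\,du$ need not converge, but your proposed fix --- exploiting that the CDFs and error-rate functions \emph{in this paper} are monotone --- does not prove the theorem as stated, since Theorem~\ref{rep_thm} carries no monotonicity hypothesis on $H$. Restricting to monotone $H$ would suffice for the applications in Sections~\ref{div_defs}--\ref{asympt}, but it leaves the general statement unproved. The argument in the cited sources closes this without monotonicity: one works with a suitably chosen integrand (for instance, built from an averaged increment of $\eta$) and shows directly from the rapid-variation hypothesis and measurability that the multiplicative residual $c(x)$ stays bounded between positive constants and hence (being slowly varying by construction) converges. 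If you want a self-contained proof rather than a citation, that is the step to flesh out; appealing to monotonicity sidesteps rather than resolves the difficulty.
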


It is well-known that the MGF of the channel random variable is
central in average performance over fading channels
\cite{simonbook00}. Instead of the MGF, the Laplace-Stieltjes
transform can also be considered, and the theory behind this
transform can be brought to bear. The primary tool used in this
paper is the Tauberian theorem for Laplace-Stieltjes transforms,
which relates the asymptotic properties of a function and of its
Laplace-Stieltjes transform, with both properties characterized by
slow/regular/rapid variation. This function will often (but not
always) be the CDF of the channel power gain in the sequel. We
summarize \cite[Theorem 1, p.443]{fellerbook71} and \cite[Theorem 2,
p.445]{fellerbook71} into the following, with the proofs available
therein.
\begin{thry} \label{tbr_th}
If a function $H(x) \geq 0$ defined on $x \geq 0$ has a
Laplace-Stieltjes transform $\mathcal{L}(s)= \int_0^{\infty} e^{-sx}
dH(x)$ for $s \geq 0$, then for $|m| \in [0,\infty]$, $H(x)$ having
variation exponent $m$ at $\infty$ (or $0$) and $\mathcal{L}(s)$
having variation exponent $-m$ at $0$ (or $\infty$) imply each
other. In addition, for $|m|<\infty$ and $l(x)$ being slowly varying
at $0$ (or $\infty$), the relations $H(x) \sim x^m l(x)$ as $x
\rightarrow 0$ (or $x \rightarrow \infty$) and $\mathcal{L}(s) \sim
\Gamma(m+1) s^{-m} l(s^{-1})$ as $s \rightarrow \infty$ (or $s
\rightarrow 0$) imply each other.
\end{thry}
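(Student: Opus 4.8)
The plan is to first reduce the ``imply each other'' statement for variation exponents to the sharper asymptotic equivalence stated in the second half of the theorem, and then to prove that equivalence by the classical route of an Abelian half ($H\Rightarrow\mathcal{L}$) followed by a Tauberian half ($\mathcal{L}\Rightarrow H$). For the reduction, I would apply Theorem~\ref{rep_thm}: if $H$ has finite variation exponent $m$, writing $H(x)=c(x)\exp\{\int_a^x\epsilon(u)/u\,du\}$ with $\epsilon(x)\to m$ exhibits $H(x)$ as $x^m$ times a slowly varying factor, so the exponent-only claim follows once the relation $H(x)\sim x^m l(x)\Leftrightarrow\mathcal{L}(s)\sim\Gamma(m+1)s^{-m}l(1/s)$ is available; applying the same representation to $\mathcal{L}$ handles the reverse implication. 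It thus suffices to prove the sharp statement for finite $m$, and by relabeling I would do so in the form ``$H(x)\sim x^m l(x)$ as $x\to0$ iff $\mathcal{L}(s)\sim\Gamma(m+1)s^{-m}l(1/s)$ as $s\to\infty$'' (the version used for the channel CDF in the sequel), the companion version (relating $H$ near $\infty$ to $\mathcal{L}$ near $0$) being proved by the same scheme with $t\to\infty$ replacing $t\to0$ below.

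\textbf{Abelian half.} Assuming $H(x)\sim x^m l(x)$ as $x\to0$, a standard integration by parts (boundary terms vanishing under the stated hypotheses) gives $\mathcal{L}(s)=s\int_0^\infty e^{-sx}H(x)\,dx$, and the substitution $x=u/s$ rewrites this as $\mathcal{L}(s)=\int_0^\infty e^{-u}H(u/s)\,du$. Regular variation of $H$ at $0$ gives $H(u/s)/(s^{-m}l(1/s))\to u^m$ pointwise in $u$ as $s\to\infty$, while the uniform convergence theorem for slowly varying functions and Potter-type bounds furnish an integrable majorant of the form $e^{-u}\cdot\mathrm{const}\cdot\max(u^{m-\delta},u^{m+\delta})$; dominated convergence then yields $\mathcal{L}(s)\sim s^{-m}l(1/s)\int_0^\infty e^{-u}u^m\,du=\Gamma(m+1)s^{-m}l(1/s)$.

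\textbf{Tauberian half.} This is the substantive part. Assume $\mathcal{L}(s)\sim\Gamma(m+1)s^{-m}l(1/s)$ as $s\to\infty$. For $t>0$ introduce the nonnegative measure $\nu_t$ on $[0,\infty)$ determined by $\nu_t([0,x])=H(tx)/\mathcal{L}(1/t)$; a change of variables in the Laplace--Stieltjes integral shows its transform equals $\mathcal{L}(s/t)/\mathcal{L}(1/t)$, which by the hypothesis and slow variation of $l$ at $0$ converges, for each fixed $s>0$, to $s^{-m}$ as $t\to0$. Since $s^{-m}$ is itself a Laplace--Stieltjes transform --- of the locally finite measure with density $x^{m-1}/\Gamma(m)$ when $m>0$, of the unit mass at the origin when $m=0$ --- the extended continuity theorem for Laplace transforms lets one conclude that $\nu_t$ converges vaguely to that limiting measure as $t\to0$, hence $\nu_t([0,x])\to x^m/\Gamma(m+1)$ at every continuity point; taking $x=1$ gives $H(t)/\mathcal{L}(1/t)\to1/\Gamma(m+1)$, and combining with $\mathcal{L}(1/t)\sim\Gamma(m+1)t^m l(t)$ produces $H(t)\sim t^m l(t)$ as $t\to0$.

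I expect the delicate point to be this last invocation of the continuity theorem: convergence of the transforms only yields vague convergence of the $\nu_t$ after one checks that the family is uniformly bounded on compact sets (so that no mass escapes to infinity) and then identifies the unique subsequential limit via uniqueness of the Laplace transform --- both ultimately powered by the monotonicity of $H$ and the finiteness of $s^{-m}$ for every $s>0$. Finally, the endpoint cases $|m|=\infty$ are not reached by the Karamata change-of-variables above; they correspond to the second of the two Feller results being summarized and would be obtained by comparison with power laws (de~Haan/exponential-type Tauberian arguments), with the representation theorem again serving as the bridge between the growth of $\epsilon(x)$ and the variation exponent of the transform.
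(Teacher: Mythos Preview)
The paper does not actually prove Theorem~\ref{tbr_th}: it is stated as a summary of \cite[Theorem~1, p.~443]{fellerbook71} and \cite[Theorem~2, p.~445]{fellerbook71}, with the explicit remark that ``the proofs [are] available therein.'' So there is no in-paper proof to compare against; the theorem functions here purely as an imported tool.

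That said, your sketch is essentially Feller's own argument for the cited theorems: the Abelian direction via the change of variables $x=u/s$ and dominated convergence with Potter-type bounds, and the Tauberian direction via the rescaled measures $\nu_t$ and the extended continuity theorem for Laplace transforms. Two small points worth tightening if you pursue this: (i) in the Abelian half, Potter bounds for $l$ at $0$ only control $H(u/s)$ for $u/s$ below some fixed threshold, so the integral over large $u$ must be handled separately (e.g.\ via boundedness of $H$ when it is a CDF, or via the assumed finiteness of $\mathcal{L}(s_0)$ for some $s_0\ge 0$); and (ii) the integration-by-parts identity $\mathcal{L}(s)=s\int_0^\infty e^{-sx}H(x)\,dx$ needs $H(0^+)=0$ for the boundary term to vanish, which is not part of the stated hypotheses --- Feller avoids this by working directly with the measure rather than passing through $H$. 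Neither issue affects the overall correctness of the route you chose.
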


We also have the following proposition which follows from
\cite[p.27]{binghambook89}, and will be used to establish the
equivalence of the CDF- and PDF-based assumptions on the channel.
\begin{prop} \label{prop_cdf_pdf}
If $H(x)$ is differentiable with $h(x)=d H(x)/dx$, and $h(x)= m
x^{m-1} l(x)$ for some function $l(x)$ slowly varying at $0$, then
$H(x) \sim x^m l(x)$ as $x \rightarrow 0$.
\end{prop}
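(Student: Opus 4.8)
The plan is to recognize Proposition~\ref{prop_cdf_pdf} as the $x\to 0$ analogue of Karamata's theorem---the Abelian direction, which passes from a regularly varying density to the asymptotics of its integral---and to invoke it after two routine reductions.

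First I would reduce the claim to the asymptotics of an integral. One must have $m>0$ (for $m=0$ the hypothesis forces $h\equiv 0$ and the statement is vacuous), and then $h(x)=m x^{m-1}l(x)$ is, up to the positive constant $m$, regularly varying at $0$ with exponent $m-1>-1$; any such function is locally integrable near the origin, so $\int_0^x h(t)\,dt$ is finite for small $x>0$. Under the normalization $H(0^+)=0$---which is the relevant case here (e.g.\ $H$ is a CDF with no probability mass at the origin) and is needed for the conclusion to even make sense---this gives $H(x)=\int_0^x h(t)\,dt$, so it remains to estimate $\int_0^x h(t)\,dt$ as $x\to 0$.

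Next I would apply Karamata's theorem \cite[p.27]{binghambook89} to the regularly varying integrand $h$. Writing $h(t)=t^{\rho}\tilde{l}(t)$ with $\rho:=m-1>-1$ and $\tilde{l}:=m\,l$ slowly varying at $0$, the theorem gives $\int_0^x h(t)\,dt\sim x\,h(x)/(\rho+1)$ as $x\to 0$. Since $\rho+1=m$ and $x\,h(x)=m x^{m}l(x)$, the right-hand side is precisely $x^{m}l(x)$, hence $H(x)=\int_0^x h(t)\,dt\sim x^{m}l(x)$, which is the assertion. An equivalent route avoids quoting the $0$-version of Karamata: substituting $t=u^{-1}$ turns $\int_0^x h(t)\,dt$ into the tail integral $\int_{1/x}^{\infty}m\,u^{-m-1}l(u^{-1})\,du$ of a function regularly varying at $\infty$ with exponent $-m-1$; the $\infty$-version of Karamata's theorem plus the change-of-variable remark $G(x)=H(x^{-1})$ noted after the definition of Karamata functions then yields the same conclusion.

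I do not expect a genuine obstacle, since once the setup is in place the argument is essentially a citation; the only steps requiring a line of care are the local integrability of $h$ near $0$ (a standard property of regularly varying functions with exponent exceeding $-1$) and making the normalization $H(0^+)=0$ explicit---without it the stated asymptotic fails, as the example $H(x)=1+x^{m}$ (with $l\equiv 1$) shows.
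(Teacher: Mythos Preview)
Your proposal is correct and takes essentially the same approach as the paper: the paper does not give a standalone proof but simply remarks that the proposition ``follows from \cite[p.27]{binghambook89}'', i.e.\ Karamata's theorem, which is exactly what you invoke after the two reductions. Your added care about local integrability and the normalization $H(0^+)=0$ is appropriate detail that the paper leaves implicit (it only ever applies the proposition to CDFs of strictly positive channel power gains, where $F(0)=0$ automatically).
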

The next proposition will be useful to link the average error rate
and outage.
\begin{prop} \label{prop_sl1}
If $H(x)$ is slowly varying at $\infty$ (or $0$), and bounded
between two constants $h_1$ and $h_2$ (where $0 \leq
h_1<h_2<\infty$) for sufficiently large (small) $x$, then it
converges to a constant in $[h_1,h_2]$ as $x \rightarrow \infty$ (or
$x \rightarrow 0$).
\end{prop}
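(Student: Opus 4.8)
The plan is to reduce to the case $x\to\infty$ and then exploit the Karamata representation of Theorem~\ref{rep_thm}. If $H$ is slowly varying at $0$ and confined to $[h_1,h_2]$ for small $x$, then $G(x):=H(x^{-1})$ is slowly varying at $\infty$ (since $G(x)=H(x^{-1})$ is slowly varying at $0$ iff $H$ is slowly varying at $\infty$) and confined to $[h_1,h_2]$ for large $x$, with $\lim_{x\to 0}H(x)=\lim_{x\to\infty}G(x)$ whenever the latter exists; so it suffices to treat slow variation at $\infty$, and I assume $h_1\le H(x)\le h_2$ for $x\ge x_0$.

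Next I would apply Theorem~\ref{rep_thm} with exponent $m=0$: write $H(x)=c(x)\exp\{I(x)\}$ for $x\ge a$, where $I(x):=\int_a^x\epsilon(u)/u\,du$, $c(x)\to c\in(0,\infty)$, and $\epsilon(x)\to 0$ as $x\to\infty$. Since $c(x)$ has a strictly positive limit, $c(x)\in[c/2,2c]$ for all large $x$, and combining this with $h_1\le H(x)\le h_2$ gives $h_1/(2c)\le\exp\{I(x)\}\le 2h_2/c$ for all large $x$; equivalently, $I(x)$ is bounded above for large $x$ (and bounded below too when $h_1>0$). It then remains to show that $I(x)$ converges to some $I_\infty\in[-\infty,\infty)$, for then $H(x)=c(x)\exp\{I(x)\}\to c\,e^{I_\infty}$, and this limit lies in $[h_1,h_2]$ because $H$ is eventually trapped there.

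The step I expect to be the main obstacle is exactly upgrading ``$I(x)$ stays bounded'' to ``$I(x)$ converges'', i.e.\ ruling out persistent slow oscillation of a bounded slowly varying function; the condition $\epsilon(x)\to 0$ alone does not pin down the value of $\int_a^x\epsilon(u)/u\,du$, so the boundedness of $H$ must be used in an essential rather than superficial way, possibly together with extra regularity of the specific $H$ arising in the average-error-rate/outage comparison (for instance eventual monotonicity, under which boundedness immediately forces a limit). An equivalent formulation that I would also try is a subsequence argument: by boundedness every sequence $x_n\to\infty$ has a sublimit in $[h_1,h_2]$, and if two sequences $x_n,y_n\to\infty$ gave distinct sublimits one could pass to subsequences with $y_n/x_n\to t$; for $t\in(0,\infty)$ the convergence $H(tx)/H(x)\to1$ uniformly on compact $t$-intervals (which follows from Theorem~\ref{rep_thm}, since $|I(tx)-I(x)|\le\sup_{u\ge x/T}|\epsilon(u)|\cdot|\log t|$ for $t\in[1/T,T]$, together with $c(tx)/c(x)\to1$) forces the sublimits to agree, so the entire difficulty concentrates on the cases $t\in\{0,\infty\}$ --- again the oscillation issue. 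Closing that case is the heart of the argument and is where I would spend the most effort.
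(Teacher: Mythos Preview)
Your instinct about the oscillation obstruction is exactly right, and it is not a difficulty that can be overcome: the proposition as stated is false. Take $H(x)=2+\sin(\log\log x)$ for $x>e$. Since $\log\log(tx)-\log\log x=\log(1+\log t/\log x)\to 0$ for each fixed $t>0$, uniform continuity of $\sin$ gives $H(tx)-H(x)\to 0$, and as $H\ge 1$ this yields $H(tx)/H(x)\to 1$; so $H$ is slowly varying, trapped in $[1,3]$, yet has no limit. In your representation-theorem language, the companion example $c\exp\{\sin(\log\log x)\}$ has $\epsilon(x)=\cos(\log\log x)/\log x\to 0$ while $I(x)=\sin(\log\log x)$ stays bounded but never converges. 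The ``$t\in\{0,\infty\}$'' case you isolated in the subsequence argument is precisely where these counterexamples live.

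For comparison, the paper's proof does not go through Theorem~\ref{rep_thm}; it multiplies $|H(tx)/H(x)-1|<\epsilon_1$ by $H(x)\le h_2$ to obtain $|H(tx)-H(x)|<\epsilon_1 h_2$ for large $x$, and then asserts this yields the full Cauchy criterion $|H(x_2)-H(x_1)|<\epsilon_2$ for \emph{all} large $x_1,x_2$. That step is the gap: the bound controls only pairs at a fixed ratio $t=x_2/x_1$, and the threshold beyond which it holds depends on $t$ (even the uniform convergence theorem for slowly varying functions only handles $t$ in compacta), so pairs with $x_2/x_1\to\infty$ escape---exactly the regime you flagged. A correct version of the result needs an additional hypothesis such as eventual monotonicity of $H$, under which boundedness trivially forces a limit; you were right to suspect that extra regularity of the specific ratios arising in Theorem~\ref{thm_asympt1} is what is really being used.
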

\begin{proof}
See \ref{prop_sl1_proof}.
\end{proof}

\section{Channel and System Model} \label{chnl_sys_mdl}

We consider average performance for systems over fading channels
where the channel can be captured by an instantaneous SNR random
variable. These include SISO systems, SIMO systems with diversity
combining, and MISO systems with beamforming. To facilitate
subsequent derivations, we factor the received instantaneous SNR
into channel-independent and channel-dependent components by
expressing it as a product $\rho z$, where the average SNR (per
symbol) $\rho$ is deterministic, and $z$ is the channel power gain
random variable having CDF $F(z)$, resulting from both the channel
and the system setup (e.g. beamforming or diversity combining). The
instantaneous error rate ${\rm P_e} (\rho z)$, as a function of
$\rho z$, is determined by the modulation type together with the
noise distribution and represents bit, or symbol error rate.

In this paper, we focus on analyzing the average error rate
\begin{equation}
{\rm \overline{P}_e}(\rho):= {\rm E} [{\rm P_e} (\rho z)]=
\int_0^{\infty} {\rm P_e} (\rho z) dF(z)
\end{equation}
for large $\rho$. We make one of the following {\it assumptions on
the instantaneous error rate} in our subsequent derivations %\vskip 1.0mm
\begin{itemize}
\begin{item}
{\bf AS1a} ${\rm P_e} (\rho z) \leq \beta e^{-\alpha \rho z}$
$\forall$ $\rho$, where $\alpha, \beta \in (0,\infty)$ are
constants. In other words, ${\rm P_e} (\rho z)$ is upper-bounded by
an exponential function of $\rho z$;
\end{item}
\begin{item}
{\bf AS1b} ${\rm P_e} (\rho z)= \beta e^{-\alpha \rho z}$, where
$\alpha, \beta \in (0,\infty)$ are constants;
\end{item}
\begin{item}
{\bf AS1c} ${\rm P_e} (\rho z)= \int_0^{\theta_{\rm t}} g_2(\theta)
\exp(-\rho z/g_1(\theta)) d \theta$, where $\theta_{\rm t} \in
(0,\pi)$, and the corresponding $g_1(\theta)$ and $g_2(\theta)$ are
both finite non-negative. In other words, ${\rm P_e} (\rho z)$ is a
positive mixture of decreasing exponential functions of $\rho z$.
\end{item}
\end{itemize}
It can be easily shown that {\bf AS1b} $\Rightarrow$ {\bf AS1c}
$\Rightarrow$ {\bf AS1a}. We will have results of differing
generality corresponding to these assumptions. {\bf AS1a} holds for
all modulation types with the additive noise being Gaussian. {\bf
AS1b} is a special case of {\bf AS1a} typically applying to
non-coherent modulations, e.g. ${\rm P_e} (\rho z)= (1/2) e^{-\rho
z}$ is the BER for DPSK. {\bf AS1c} applies to many practical
settings with appropriate choices of $\theta_{\rm t}$, $g_1(\theta)$
and $g_2(\theta)$. Specifically, the SER of $M$-PSK
\cite[(8.22)]{simonbook00} fits {\bf AS1c} with $\theta_{\rm t}= (1-
M^{-1}) \pi$, $g_1(\theta)= \sin^2 \theta/ \sin^2 (M^{-1} \pi)$ and
$g_2(\theta)= \pi^{-1}$; by making use of $Q^{\alpha}(x)= \pi^{-1}
\int_0^{\pi/2 \alpha} \exp (-x^2/(2 \sin^2 \theta)) d \theta$ for
$\alpha=1,2$ \cite[(4.2) and (4.9)]{simonbook00}, it is possible to
express the BER ${\rm P_e} (\rho z)= Q(\sqrt{2 \rho z})$ for BPSK
and the SER ${\rm P_e} (\rho z)= 4 (1-M^{-1/2}) Q(\sqrt{3 \rho
z/(M-1)})- 4 (1-M^{-1/2})^2 Q^2(\sqrt{3 \rho z/(M-1)})$
\cite[(8.10)]{simonbook00} for square $M$-QAM in the form of {\bf
AS1c}. For an arbitrary two-dimensional signal constellation with
polygon-shaped decision regions over AWGN, ${\rm P_e} (\rho z)$ can
be treated as linear combination of the error rates associated with
individual constellation points given by \cite[(5.71)]{simonbook00},
which is also in the form of {\bf AS1c}.

We make one of the following {\it assumptions on the channel} in
terms of the CDF $F(z)$, which is assumed not to depend on the
average SNR $\rho$:
\begin{itemize}
\begin{item}
{\bf AS2a} $\lim_{z \rightarrow 0} F(\tau z)/F(z)= \tau^d$, where
$\tau, d>0$. In other words, $F(z)$ is a Karamata function of $z$ at
$0$ with variation exponent $d \in (0,\infty]$;
\end{item}
\begin{item}
{\bf AS2b} Same as {\bf AS2a} with $d \in (0,\infty)$, in which case
one can write $F(z)=z^d l(z)$ with $l(z)$ slowly varying at $0$;
\end{item}
\begin{item}
{\bf AS2c} The PDF $f(z):= dF(z)/dz$ exists and $f(z)= d z^{d-1}
l_1(z)$, where $d \in (0,\infty)$ and $l_1(z)$ is slowly varying at
$0$.
\end{item}
\end{itemize}
Based on Proposition \ref{prop_cdf_pdf}, {\bf AS2b} is implied by
{\bf AS2c}. Therefore the assumptions, as listed, get stronger,
i.e., {\bf AS2c} $\Rightarrow$ {\bf AS2b} $\Rightarrow$ {\bf AS2a}.
The difference between {\bf AS2a} and {\bf AS2b} is the allowance of
$d=\infty$, which is equivalent to the rapid variation of $F(z)$ at
the origin. This holds, for example, for log-normal shadowing. When
$d=\infty$ is ruled out, we have {\bf AS2b}, which will be seen to
offer sharper results than offered by {\bf AS2a}. When the PDF of
the channel power gain exists and is in a simpler form than the CDF,
{\bf AS2c} can be employed. This assumption is similar to (but more
general than) that of \cite{wang03}, which is also based on the PDF.

%\begin{spacing}{3.0} \end{spacing}

As mentioned, it can be verified that log-normal shadowing for which
$\log z$ is Gaussian has a CDF $F(z)$ satisfying {\bf AS2a} with
$d=\infty$. Also, it is easy to see that Rayleigh, Nakagami-$K$
(Ricean) and Nakagami-$q$ (Hoyt) fading channels satisfy all three
assumptions with $d=1$. Furthermore, Nakagami-$m$ fading with
$f(z)=(\Gamma(m))^{-1} m^m z^{m-1} \exp(-mz)$ and $F(z)= \gamma
(m,mz)$ can be verified to satisfy all three assumptions with $d=m$;
Weibull fading defined by $f(z)=k z^{k-1} \exp(-z^k)$ or by $F(z)=
1-\exp(-z^k)$ has $d=k$. Consider also generalized-$K$ fading which
models composite multi-path fading and shadowing \cite{bithas06}.
For such fading with $f(z)= 2 (km)^{(k+m)/2} z^{(k+m-2)/2} K_{k-m}
[2 (kmz)^{1/2}]/(\Gamma(m) \Gamma(k))$ \cite[eqn. (2)]{bithas06},
$m,k>0$, based on $K_0 (x) \sim -\log (x/2)- \gamma_{\rm em}$
($\gamma_{\rm em} \approx 0.5772$ is the Euler-Mascheroni constant)
and $K_{\nu} (x) \sim \Gamma(|\nu|) (2 x^{-1})^{|\nu|} /2$ for $\nu
\ne 0$ near $x=0$, it can be verified that {\bf AS2c} holds with $d=
\min(m,k)$.

\section{Three Definitions of Diversity Order and Their Relations}
\label{div_defs}

Consider the following definitions of diversity order: \vskip -7mm
\begin{subequations}
\begin{equation} \label{divdef1}
{\rm \overline{P}_e}(\rho)= ( G \rho )^{-d}+ o(\rho^{-d})
\end{equation}
\begin{equation} \label{divdef2}
\hspace{-12mm} \lim_{\rho \rightarrow \infty} \frac{{\rm
\overline{P}_e}(\tau \rho)} {{\rm \overline{P}_e}(\rho)}= \tau^{-d}
\end{equation}
\begin{equation} \label{divdef3}
\hspace{-9mm} \lim_{\rho \rightarrow \infty} \frac{\log {\rm
\overline{P}_e}(\rho)} {\log \rho}= -d
\end{equation} \vskip -3mm
\end{subequations}
\hspace{-5mm}where $G, \tau \in (0,\infty)$ are constants. As
mentioned in Section \ref{intro}, (\ref{divdef1}) and
(\ref{divdef3}) are adopted in existing literature (e.g. in
\cite{wang03} and \cite{zheng03} respectively), while
(\ref{divdef2}) is our preferred novel definition indicating that
${\rm \overline{P}_e}(\cdot)$ is regularly varying at $\infty$ with
exponent $-d$ when $d<\infty$, and rapidly varying with exponent
$-\infty$ when $d=\infty$. Note that the definition (\ref{divdef1})
requires $d \in (0,\infty)$, whereas (\ref{divdef2}) and
(\ref{divdef3}) allows for $d=\infty$ as well. We interpret the
rapid variation ($d=\infty$) in (\ref{divdef2}) or (\ref{divdef3})
as an average error rate which decays faster than $\rho^{-m}$, as
$\rho \rightarrow \infty$ for any $m \in (0,\infty)$. In other
words, $d=\infty$ can be interpreted as the ${\rm
\overline{P}_e}(\rho)$ versus $\rho$ plot on a log scale does not
become a straight line, but ``curves down'' as the average SNR
increases.

There has been some previous work on relating the channel PDF near
the origin to asymptotic error rates. The seminal work in
\cite{wang03} shows that if $f(z) \sim a z^m$ as $z \rightarrow 0$,
then (\ref{divdef1}) holds with diversity order $d= m+1$ and array
gain $G= \alpha ((2^{d-1} a \Gamma (d+1/2))/(\beta \sqrt{\pi}
d))^{-1/d}$. A relation between average error rate and outage
probability is also established by quantifying how the respective
diversity orders and array gains are related. The approach in
\cite{wang03} assumes {\bf (i)} the existence of the PDF and its
Maclaurin expansion near the origin, and is not naturally linked
with outage; {\bf (ii)} the instantaneous error rate is in the form
of a Q function (a special case of assumption {\bf AS1c}); {\bf
(iii)} the average error rate is of the form in (\ref{divdef1}) with
$d<\infty$.

As an example, it can be verified that $f(z) \sim 2 m^{2m} z^{m-1}
(-\log(m \sqrt{z})- \gamma_{\rm em} )/\Gamma^2(m)$ and $\lim_{z
\rightarrow 0} l(z)= \infty$ for generalized-$K$ fading with $m=k$,
and $f(z)$ does not admit a Maclaurin expansion, thus the approach
in \cite{wang03} is not applicable to this kind of channel. We
believe that the effective mathematical framework for studying
high-SNR asymptotic error rates is the Tauberian theorem, with a
distinct advantage of characterizing when outage events dominate
error rate performance by linking the CDF $F(\cdot)$ with ${\rm
\overline{P}_e}(\rho)$ at high average SNR. Our approach for
high-SNR analysis enables less restrictive assumptions on both the
instantaneous error rate and the channel distribution compared to
\cite{wang03}, as we will highlight in the sequel.

The following result establishes that, under general conditions,
diversity order defined through (\ref{divdef3}), is equivalent to
our definition (\ref{divdef2}) when the limit in (\ref{divdef2})
exists.

\begin{prop} \label{prop_def1}
If ${\rm \overline{P}_e}(\rho)$ is a Karamata function,
(\ref{divdef2}) is equivalent to (\ref{divdef3}) for $0<d \leq
\infty$. Moreover, definition (\ref{divdef1}) is a special case of
(\ref{divdef2}) and (\ref{divdef3}) (for $0<d<\infty$).
\end{prop}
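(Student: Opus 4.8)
The plan is to isolate one elementary fact about Karamata functions and let everything follow from it: \emph{if $H(x)$ has variation exponent $m \in [-\infty,\infty]$ at $\infty$, then $\log H(x)/\log x \to m$ as $x \to \infty$.} Granting this, the equivalence of (\ref{divdef2}) and (\ref{divdef3}) is immediate. Indeed, (\ref{divdef2}) asserts precisely that ${\rm \overline{P}_e}(\rho)$ has variation exponent $-d$ at $\infty$ (with $d = \infty$ meaning rapid variation), so the fact yields $\log {\rm \overline{P}_e}(\rho)/\log\rho \to -d$, which is (\ref{divdef3}). Conversely, the hypothesis that ${\rm \overline{P}_e}(\rho)$ is a Karamata function means it has \emph{some} variation exponent $m \in [-\infty,\infty]$; the fact gives $\log {\rm \overline{P}_e}(\rho)/\log\rho \to m$, and comparing with (\ref{divdef3}) forces $m = -d$, i.e. (\ref{divdef2}) holds. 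The Karamata hypothesis is genuinely needed in this direction: $\rho^{-d}$ times a positive factor that jumps by a bounded multiplicative amount on a doubly-logarithmic scale satisfies (\ref{divdef3}) but is not Karamata, so it violates (\ref{divdef2}).

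To prove the elementary fact I would invoke the representation theorem (Theorem \ref{rep_thm}): write $H(x) = c(x) \exp\left\{ \int_a^x (\epsilon(u)/u)\,du \right\}$ with $c(x) \to c \in (0,\infty)$ and $\epsilon(x) \to m$. Then $\log H(x) = \log c(x) + \int_a^x (\epsilon(u)/u)\,du$; dividing by $\log x$, the term $\log c(x)/\log x$ vanishes since $\log c(x)$ stays bounded, and since $\log x = \log a + \int_a^x du/u$ with $\int_a^x du/u \to \infty$, L'H\^{o}pital's rule (or, equivalently, splitting the integral at a point $A$ beyond which $\epsilon(u)$ lies within a prescribed tolerance of $m$ and noting that the piece over $[a,A]$ is negligible after division) gives $\int_a^x (\epsilon(u)/u)\,du \big/ \log x \to m$. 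The same splitting argument covers $m = \pm\infty$, using there that $\epsilon(u)$ eventually lies below (or above) any prescribed level.

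For the ``moreover'' claim, suppose (\ref{divdef1}) holds with $0 < d < \infty$, so that ${\rm \overline{P}_e}(\rho) = G^{-d}\rho^{-d}(1 + o(1))$, i.e. ${\rm \overline{P}_e}(\rho) \sim G^{-d}\rho^{-d}$ as $\rho \to \infty$; in particular ${\rm \overline{P}_e}(\rho)$ is then automatically a Karamata function with exponent $-d$. Directly, ${\rm \overline{P}_e}(\tau\rho)/{\rm \overline{P}_e}(\rho) = \tau^{-d}(1 + o(1))/(1 + o(1)) \to \tau^{-d}$, which is (\ref{divdef2}); and $\log {\rm \overline{P}_e}(\rho) = -d\log G - d\log\rho + \log(1 + o(1))$, so dividing by $\log\rho$ gives the limit $-d$, which is (\ref{divdef3}). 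Hence (\ref{divdef1}) implies both. The converse fails (for instance ${\rm \overline{P}_e}(\rho) = \rho^{-d}\log\rho$ satisfies (\ref{divdef2}) and (\ref{divdef3}) but not (\ref{divdef1})), so (\ref{divdef1}) is a strictly stronger, special case.

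The only substantive step is the elementary fact, and even its content is mild --- it is just the statement that a regularly or rapidly varying function is ``$\log$-asymptotically'' a power of $x$ --- so the representation theorem supplies essentially all of the work. The one place calling for a little care is carrying the $m = \pm\infty$ cases (rapid variation, i.e. $d = \infty$) through the argument uniformly rather than patching them in at the end.
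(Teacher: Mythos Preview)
Your proof is correct and follows essentially the same approach as the paper: both invoke the representation theorem (Theorem~\ref{rep_thm}) to write ${\rm \overline{P}_e}(\rho)= c(\rho)\exp\{\int_a^\rho \epsilon(u)/u\,du\}$, then use L'H\^{o}pital to obtain $\log {\rm \overline{P}_e}(\rho)/\log\rho \to -d$, and handle the converse by noting the Karamata assumption forces some exponent $m_1$ which the same computation identifies with $-d$. Your organization is slightly cleaner in that you isolate the key implication ``variation exponent $m$ $\Rightarrow$ $\log H(x)/\log x \to m$'' as a standalone fact and apply it twice, whereas the paper runs the computation inline each time; the counterexample $\rho^{-d}\log\rho$ is also the paper's (with $c_2=1$).
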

\begin{proof}
See \ref{prop_def1_proof}.
\end{proof}

Since all average error rate expressions in existing literature are
Karamata functions, the novel definition in (\ref{divdef2}) is
equivalent to the general one in (\ref{divdef3}), and subsumes
(\ref{divdef1}).

\begin{spacing}{1.6}

\section{Asymptotic Analysis based on the Tauberian Theorem} \label{asympt}

In this section, we prove the asymptotic equivalence between the
high-SNR average error rate and the channel distribution at deep
fading (i.e. outage with small threshold), through the Tauberian
theorem. In Section \ref{asympt_gen}, we first establish a necessary
and sufficient condition for ${\rm \overline{P}_e}(\rho)=
\int_0^{\infty} {\rm P_e} (\rho z) dF(z)$ to exhibit diversity order
of $d$ for ${\rm P_e} (\rho z)$ satisfying {\bf AS1a}. Furthermore,
the relation between the asymptotic expressions of ${\rm
\overline{P}_e}(\rho)$ and $F(z)$ is also characterized through
regular variation for finite $d$. In Section \ref{ber_psk}, we
derive the high-SNR asymptotic average error rates for the
instantaneous error rate satisfying {\bf AS1b} or {\bf AS1c},
together with $F(z)$ satisfying {\bf AS2b} or {\bf AS2c}. These
results reveal a convenient and effective way to characterize
asymptotic error rate performance. Although the relation between
error rate and outage is addressed in several ways like in
\cite{wang03} and \cite{suraweera08}, we present brand new way to
relate average error rate to outage probability by expressing the
asymptotic average error rate in terms of a scaled version of outage
probability.

\subsection{Exponentially Bounded Instantaneous Error Rate} \label{asympt_gen}

For the general case with ${\rm P_e} (\rho z)$ satisfying {\bf
AS1a}, we have the following theorem which links the notion of
regular/rapid variation with diversity order, and allows for
$d=\infty$.
\begin{thry} \label{thm_div1}
For ${\rm P_e} (\rho z)$ satisfying {\bf AS1a}, ${\rm
\overline{P}_e}(\rho)$ exhibits a diversity order of $d \in
(0,\infty]$ if the CDF $F(z)$ of the channel power gain satisfies
{\bf AS2a}. The converse holds if $F(z)$ is a Karamata function (at
$0$).
\end{thry}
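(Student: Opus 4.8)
The plan is to reduce everything to the Laplace--Stieltjes transform of the channel CDF and lean on the Tauberian theorem (Theorem~\ref{tbr_th}), closing the gap between ${\rm \overline{P}_e}(\rho)$ and that transform with a two-sided bound and, for finite $d$, with a dominated-convergence refinement. Write $\mathcal{L}_F(s):= {\rm E}[e^{-sz}]= \int_0^{\infty} e^{-sz}\,dF(z)$. By Theorem~\ref{tbr_th} with $H=F$, the CDF satisfies \textbf{AS2a} with exponent $d\in(0,\infty]$ if and only if $\mathcal{L}_F(s)$ has variation exponent $-d$ at $\infty$. From \textbf{AS1a} one gets at once the upper bound $0\le{\rm \overline{P}_e}(\rho)\le\beta\,\mathcal{L}_F(\alpha\rho)$, and from the natural fact that ${\rm P_e}(\cdot)$ is non-increasing with ${\rm P_e}(0^+)\in(0,\infty)$ the matching lower bound
\begin{equation}
{\rm \overline{P}_e}(\rho)\ \ge\ \int_0^{\gamma_{\rm t}/\rho}{\rm P_e}(\rho z)\,dF(z)\ \ge\ {\rm P_e}(\gamma_{\rm t})\,F(\gamma_{\rm t}/\rho)
\end{equation}
for any fixed threshold $\gamma_{\rm t}>0$, with ${\rm P_e}(\gamma_{\rm t})>0$; note that $F(\gamma_{\rm t}/\rho)$ is exactly the outage probability at threshold $\gamma_{\rm t}$ and, by \textbf{AS2a}, has variation exponent $-d$ at $\infty$ as a function of $\rho$.

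For the forward implication I would split on whether $d$ is finite. If $d=\infty$, Theorem~\ref{tbr_th} makes $\mathcal{L}_F$ rapidly varying at $\infty$, hence $\mathcal{L}_F(s)=o(s^{-m})$ for every $m>0$, and the upper bound forces ${\rm \overline{P}_e}(\rho)=o(\rho^{-m})$ for all $m>0$, which is diversity order $\infty$ in the sense of Section~\ref{div_defs}. If $d<\infty$, so that \textbf{AS2b} holds with $F(z)=z^d l(z)$, $l$ slowly varying at $0$, I would integrate by parts in the Riemann--Stieltjes sense (using ${\rm P_e}(\infty)=0$ and $F(0^+)=0$) to obtain
\begin{equation}
{\rm \overline{P}_e}(\rho)= \int_0^{\infty}F(u/\rho)\,\mu(du),\qquad \mu:=-d{\rm P_e}\ \ge\ 0,
\end{equation}
a finite non-negative measure with $\mu([t,\infty))={\rm P_e}(t)\le\beta e^{-\alpha t}$. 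Since $F(u/\rho)/F(1/\rho)\to u^d$ pointwise as $\rho\to\infty$ (directly from \textbf{AS2a}), the uniform-convergence (Potter) bounds for $l$ dominate the integrand, on the range where $u/\rho$ is small, by a constant multiple of $\max(u^{d-\delta},u^{d+\delta})$ with $0<\delta<d$, while on the complementary range (where $u/\rho$ stays bounded away from $0$) the contribution is at most a vanishing multiple of $F(1/\rho)$ by the exponential tail of $\mu$; dominated convergence then yields
\begin{equation}
{\rm \overline{P}_e}(\rho)\ \sim\ C_e\,F(1/\rho),\qquad C_e:= \int_0^{\infty}u^d\,\mu(du)= d\int_0^{\infty}u^{d-1}{\rm P_e}(u)\,du,
\end{equation}
with $C_e\in(0,\infty)$ by \textbf{AS1a} and ${\rm P_e}(0^+)>0$. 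As $F$ has variation exponent $d$ at $0$, $F(1/\rho)$ has variation exponent $-d$ at $\infty$, hence so does ${\rm \overline{P}_e}(\rho)$; this is~(\ref{divdef2}) and implies~(\ref{divdef3}). (A quicker route to~(\ref{divdef3}) alone is to take logarithms of the two-sided bound and divide by $\log\rho$, the logarithm of any slowly varying factor being $o(\log\rho)$.)

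For the converse, suppose ${\rm \overline{P}_e}(\rho)$ has diversity order $d$ and, in addition, $F$ is a Karamata function at $0$, hence satisfies \textbf{AS2a} with some exponent $d'\in(0,\infty]$. Applying the forward implication to this $d'$ shows ${\rm \overline{P}_e}(\rho)$ has diversity order $d'$. Since the variation exponent of a function at $\infty$ is unique when it exists---and, in particular, a function cannot be regularly varying with a finite exponent while also decaying faster than every power of $\rho$---we conclude $d'=d$, so $F$ satisfies \textbf{AS2a} with exponent $d$, which completes the equivalence.

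The step I expect to be the real work is the dominated-convergence argument giving ${\rm \overline{P}_e}(\rho)\sim C_e F(1/\rho)$: one must bring in the Potter/uniform-convergence estimates to dominate $l(u/\rho)/l(1/\rho)$ uniformly in $u$ over a window that grows with $\rho$, and then verify that the complementary large-$u$ tail is negligible relative to $F(1/\rho)$---and it is precisely here that the exponential bound \textbf{AS1a} is indispensable, as it both makes $\int_0^{\infty}u^d\,\mu(du)$ converge and forces the tail ratio to vanish. By comparison, the $d=\infty$ case and the converse are routine once the Tauberian theorem and the uniqueness of variation exponents are in hand.
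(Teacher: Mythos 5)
Your proposal is correct, and its skeleton coincides with the paper's: the same exponential upper bound $\beta\,{\rm E}[e^{-\alpha\rho z}]$ handled by the Tauberian theorem, the same truncation lower bound ${\rm P_e}(\eta)F(\eta/\rho)$ from monotonicity of ${\rm P_e}$, the same treatment of $d=\infty$, and a converse that is in essence the paper's (apply the forward direction to the exponent that $F$ is assumed to possess and invoke uniqueness of the limit; the paper phrases it as the sandwich forcing $m_2=d$). Where you diverge is the finite-$d$ forward step: the paper stops at definition (\ref{divdef3}) by squeezing $\log{\rm \overline{P}_e}(\rho)/\log\rho$ between the Tauberian asymptotic of the upper bound and the explicit lower bound, whereas you integrate by parts against the measure $-d{\rm P_e}$ and run a Potter-bound/dominated-convergence argument to get the full asymptotic equivalence ${\rm \overline{P}_e}(\rho)\sim \bigl(d\int_0^\infty u^{d-1}{\rm P_e}(u)\,du\bigr)F(1/\rho)$. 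That is more than Theorem \ref{thm_div1} asks for --- it essentially re-proves Theorem \ref{thm_asympt1} and even supplies the constant $c_1$ in closed form, which the paper only obtains later (and only for {\bf AS1b}/{\bf AS1c}) --- at the cost of needing ${\rm P_e}$ monotone so that $-d{\rm P_e}$ is a bona fide measure (an assumption the paper also uses, but only for the lower bound). The ``quicker route'' you mention parenthetically (logarithms of the two-sided bound, slowly varying factors being $o(\log\rho)$) is exactly the paper's argument. One shared loose end: in the converse you assert that a Karamata CDF at $0$ has exponent in $(0,\infty]$; a priori the exponent could be $0$, and this case should be ruled out (immediately, since then the lower bound ${\rm P_e}(\eta)F(\eta/\rho)$ would be slowly varying, contradicting the assumed polynomial decay) --- the paper glosses over the same point.
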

\begin{proof}
See \ref{thm_div1_proof}.
\end{proof}

Theorem \ref{thm_div1} fundamentally characterizes, with sufficient
and necessary conditions, the diversity order in terms of the CDF
(outage) for small arguments. This is unlike \cite{wang03} which
only provides sufficient conditions on the channel distribution to
achieve a certain diversity order. We now have the following theorem
which, unlike Theorem \ref{thm_div1}, rules out $d= \infty$ and
assumes $d< \infty$, but in return provides stronger results about
how the asymptotic expressions of ${\rm \overline{P}_e}(\rho)$ and
$F(z)$ are related, thereby characterizing when the outage event
dominates the error rate performance. \vskip 1mm
\begin{thry} \label{thm_asympt1}
For ${\rm P_e} (\rho z)$ satisfying {\bf AS1a}, if the CDF of the
channel power gain satisfies either {\bf AS2b} or {\bf AS2c}, ${\rm
\overline{P}_e}(\rho) \sim c_1 \rho^{-d} l(\rho^{-1})= c_1
F(\rho^{-1})$ as $\rho \rightarrow \infty$ where $c_1 \in
(0,\infty)$ is a constant. Conversely, if ${\rm
\overline{P}_e}(\rho)= \rho^{-d} r(\rho)$ as $\rho \rightarrow
\infty$ where $r(\rho)$ is slowly varying at $\infty$, $F(z) \sim
c_2 z^d r(z^{-1})= c_2 {\rm \overline{P}_e}(z^{-1})$ as $z
\rightarrow 0$ where $c_2 \in (0,\infty)$ is a constant, assuming
that $F(z)$ is a Karamata function (at $0$).
\end{thry}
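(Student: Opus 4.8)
\emph{Proof plan.} The plan is to prove the two directions separately; the direct statement carries essentially all the work, while the converse follows formally from it together with Theorem \ref{thm_div1}.

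\emph{Direct part.} Assume \textbf{AS1a} and \textbf{AS2b} (the case \textbf{AS2c} is subsumed, since \textbf{AS2c}$\,\Rightarrow\,$\textbf{AS2b} by Proposition \ref{prop_cdf_pdf}). Write $F(z) = z^{d} l(z)$ and substitute $u = \rho z$:
\[
{\rm \overline{P}_e}(\rho) = \int_0^{\infty} {\rm P_e}(\rho z)\, dF(z) = \int_0^{\infty} {\rm P_e}(u)\, dF_{\rho}(u), \qquad F_{\rho}(u) := F(u/\rho) .
\]
By \textbf{AS2b} (regular variation of $F$ at $0$), $F_{\rho}(u)/F(\rho^{-1}) = F(u\rho^{-1})/F(\rho^{-1}) \to u^{d}$ for each fixed $u>0$ as $\rho\to\infty$, and since the limiting distribution function $u\mapsto u^{d}$ is continuous, the normalized Stieltjes measures $dF_{\rho}/F(\rho^{-1})$ converge to the measure with density $d\,u^{d-1}$. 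This identifies the candidate constant
\[
c_1 := d \int_0^{\infty} u^{d-1}\, {\rm P_e}(u)\, du ,
\]
which is finite by \textbf{AS1a} (as $u^{d-1}{\rm P_e}(u) \le \beta u^{d-1} e^{-\alpha u}$ is integrable on $(0,\infty)$) and strictly positive since ${\rm P_e}$ is a nonzero non-negative function. Specializing, \textbf{AS1b} gives $c_1 = \beta\,\Gamma(d+1)\,\alpha^{-d}$ (consistent with applying the second part of Theorem \ref{tbr_th} to $\beta\int_0^{\infty} e^{-\alpha\rho z}\, dF(z)$), and \textbf{AS1c} gives $c_1 = \Gamma(d+1)\int_0^{\theta_{\rm t}} g_2(\theta)\, g_1(\theta)^{d}\, d\theta$.

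\emph{Passing to the limit --- the crux.} It remains to show $\tfrac{1}{F(\rho^{-1})}\int_0^{\infty}{\rm P_e}(u)\,dF_{\rho}(u) \to c_1$, i.e.\ to interchange limit and integral. I would split the range at an auxiliary $M$. On $(0,M]$ the normalized measures have mass $F(M/\rho)/F(\rho^{-1}) \to M^{d}$, hence are uniformly bounded and weakly convergent, so (using that ${\rm P_e}$ is bounded and continuous, as holds for all error-rate functions of interest) $\tfrac{1}{F(\rho^{-1})}\int_{(0,M]}{\rm P_e}\,dF_{\rho} \to d\int_0^{M} u^{d-1}{\rm P_e}(u)\,du$. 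For the tail I would use Potter's bounds for the regularly varying $F$, which follow from the representation Theorem \ref{rep_thm}: for every $\delta>0$ there are $C_{\delta},x_0>0$ with $F(u/\rho)/F(\rho^{-1}) \le C_{\delta}\, u^{d+\delta}$ for $1 \le u \le \rho x_0$ and all large $\rho$. Combined with $\int_M^{\rho x_0} e^{-\alpha u}u^{d+\delta}\,du \le \int_M^{\infty}e^{-\alpha u}u^{d+\delta}\,du$ and the observation that for $u>\rho x_0$ the factor $e^{-\alpha u}\le e^{-\alpha\rho x_0}$ overwhelms $1/F(\rho^{-1})$, this yields $\limsup_{\rho}\tfrac{1}{F(\rho^{-1})}\int_{(M,\infty)}{\rm P_e}\,dF_{\rho} \le \beta C_{\delta}\int_M^{\infty}e^{-\alpha u}u^{d+\delta}\,du$, which $\to 0$ as $M\to\infty$. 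Letting $M\to\infty$ finishes the direct part, giving ${\rm \overline{P}_e}(\rho) \sim c_1 F(\rho^{-1}) = c_1\rho^{-d}l(\rho^{-1})$. I expect this uniform tail estimate --- the Potter-bound step and the handling of the far tail $u>\rho x_0$ --- to be the main obstacle; the rest is bookkeeping.

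\emph{Converse.} Suppose ${\rm \overline{P}_e}(\rho) = \rho^{-d}r(\rho)$ with $r$ slowly varying at $\infty$, and $F$ a Karamata function at $0$. Then ${\rm \overline{P}_e}$ is regularly varying at $\infty$ with exponent $-d$, i.e.\ it has diversity order $d$ in the sense of (\ref{divdef2})--(\ref{divdef3}), so the converse part of Theorem \ref{thm_div1} gives that $F$ has variation exponent $d$ at $0$; since $d<\infty$, this is exactly \textbf{AS2b}: $F(z) = z^{d}l(z)$ with $l$ slowly varying at $0$. Now invoke the direct part just proved, ${\rm \overline{P}_e}(\rho) \sim c_1\rho^{-d}l(\rho^{-1})$; comparing with ${\rm \overline{P}_e}(\rho)=\rho^{-d}r(\rho)$ forces $l(z) \sim r(z^{-1})/c_1$ as $z\to 0$, whence $F(z) = z^{d}l(z) \sim c_2 z^{d}r(z^{-1}) = c_2\,{\rm \overline{P}_e}(z^{-1})$ as $z\to 0$, with $c_2 := 1/c_1 \in (0,\infty)$.
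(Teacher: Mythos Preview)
Your converse is essentially identical to the paper's: invoke Theorem~\ref{thm_div1} to place $F$ in the class \textbf{AS2b}, then feed back into the direct part and read off $c_2=1/c_1$.

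Your direct part, however, takes a genuinely different route. The paper argues \emph{softly}: it forms the ratio $l_1(\rho):=\overline{\rm P}_{\rm e}(\rho)/(\rho^{-d}l(\rho^{-1}))$, observes that the exponential upper bound (\ref{ber_ub})--(\ref{ber_ub2}) and the truncation lower bound (\ref{ber_lb})--(\ref{ber_lb2}) trap $l_1$ between two positive constants for large $\rho$, notes that $l_1$ is slowly varying, and then invokes Proposition~\ref{prop_sl1} (a bounded slowly varying function must converge). This yields existence of $c_1\in(0,\infty)$ but does \emph{not} identify it; the explicit constants for \textbf{AS1b}/\textbf{AS1c} are obtained only later in Section~\ref{ber_psk} by separate Tauberian arguments. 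Your weak-convergence-plus-Potter-bounds approach computes the limit directly and delivers the general formula $c_1=d\int_0^\infty u^{d-1}{\rm P_e}(u)\,du$ under \textbf{AS1a} alone, which is strictly more information than Theorem~\ref{thm_asympt1} as stated. The price is that you need Potter's bounds and a mild regularity assumption on ${\rm P_e}$ (you slip in continuity), whereas the paper's sandwich needs only monotonicity of ${\rm P_e}$ and Proposition~\ref{prop_sl1}.

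One technical point to tighten: Potter's inequality controls the \emph{ratio} $F(u/\rho)/F(\rho^{-1})$, not the Stieltjes increment $dF_\rho$, so the displayed tail bound $\tfrac{1}{F(\rho^{-1})}\int_{(M,\rho x_0]}{\rm P_e}\,dF_\rho\le \beta C_\delta\int_M^\infty e^{-\alpha u}u^{d+\delta}\,du$ does not follow as written. The clean fix is to integrate by parts first, writing $\overline{\rm P}_{\rm e}(\rho)/F(\rho^{-1})=\int_0^\infty \bigl(F(u/\rho)/F(\rho^{-1})\bigr)\,(-d{\rm P_e}(u))$; now Potter bounds the integrand pointwise on $(0,\rho x_0]$ by $C_\delta\max(u^{d-\delta},u^{d+\delta})$, and dominated convergence (against the finite positive measure $-d{\rm P_e}$, using \textbf{AS1a} for integrability) gives the limit $\int_0^\infty u^d(-d{\rm P_e}(u))=c_1$. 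The far-tail piece $u>\rho x_0$ is handled exactly as you say.
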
 \vskip 1mm
\begin{proof}
See \ref{thm_asympt1_proof}.
\end{proof} \vskip 1mm

Theorem \ref{thm_asympt1} goes beyond characterizing diversity order
and points out the asymptotic proportionality of ${\rm
\overline{P}_e}(\rho)$ with $F(\rho^{-1})$ as $\rho \rightarrow
\infty$, for general ${\rm P_e} (\rho z)$ satisfying {\bf AS1a}.
This naturally establishes sufficient and necessary conditions on
the asymptotic equivalence between average error rate and outage. An
implication of Theorem \ref{thm_asympt1} is that for two different
communication systems over the same channel, there always exists a
constant SNR offset between their error rate performance at
sufficiently high average SNR, as long as their instantaneous error
rates are both exponentially bounded.

\subsection{Specific Instantaneous Error Rates} \label{ber_psk}

We have already established the regular variation of the CDF of the
channel power gain at $0$ as a necessary and sufficient condition
for a specific diversity order, under general modulation types with
${\rm P_e} (\rho z)$ satisfying {\bf AS1a}, using the Tauberian
theorem. We now offer the sharper results when ${\rm P_e} (\rho z)$
satisfies {\bf AS1b} or {\bf AS1c}, with the channel distribution
satisfying either {\bf AS2b} or {\bf AS2c}. The following results
make stronger assumptions about the instantaneous error rate ({\bf
AS1b}, {\bf AS1c}) but offer closed-form expressions for constants
$c_1$ and $c_2$ in Theorem \ref{thm_asympt1}.

For ${\rm P_e} (\rho z)$ satisfying {\bf AS1b}, the asymptotic
expression of the average error rate follows directly from Tauberian
theorem (Theorem \ref{tbr_th}) since the average error rate is in
the form of a Laplace-Stieltjes transform. Given {\bf AS2b} or {\bf
AS2c}, the asymptotic average error rate is given by \vskip -10mm
\goodbreak \begin{equation} \label{ber_dpsk_asymp} {\rm
\overline{P}_{e,AS1b}}(\rho):= \beta \int_0^{\infty} e^{-\alpha \rho
z} dF(z) \sim \beta \Gamma(d+1) (\alpha \rho)^{-d} l(\alpha^{-1}
\rho^{-1})= \beta \Gamma(d+1) F(\alpha^{-1} \rho^{-1})
\end{equation} \vskip -4mm
\hspace{-5mm}as $\rho \rightarrow \infty$, where the asymptotic
equality holds due to Tauberian theorem (Theorem \ref{tbr_th}) with
$F(z)$ corresponding to $H(x)$, and $\alpha \rho$ corresponding to
$s$. The last equality in (\ref{ber_dpsk_asymp}) holds since $F(z)=
z^d l(z)$. Equation (\ref{ber_dpsk_asymp}) shows the high-SNR
asymptotic average error rate in terms of the CDF $F(\cdot)$.
Conversely, given the average error rate ${\rm
\overline{P}_{e,AS1b}} (\rho)$ with diversity order $d$, the
asymptotic CDF of the channel power gain with small argument can be
obtained through the substitution $\rho= \alpha^{-1} z^{-1}$, based
on Tauberian theorem (Theorem \ref{tbr_th}).

For ${\rm P_e} (\rho z)$ satisfying {\bf AS1c}, the asymptotic
expression of the average error rate
\begin{equation} \label{ber_bpsk_int}
{\rm \overline{P}_{e,AS1c}}(\rho):= \int_0^{\infty}
\int_0^{\theta_{\rm t}} g_2(\theta) \exp \left( -\frac{\rho
z}{g_1(\theta)} \right) d \theta dF(z)= \int_0^{\theta_{\rm t}}
\int_0^{\infty} g_2(\theta) \exp \left( -\frac{\rho z}{g_1(\theta)}
\right) dF(z) d \theta
\end{equation}
can be obtained by evaluating $\lim_{\rho \rightarrow \infty} {\rm
\overline{P}_{e,AS1c}}(\rho)/ F(\rho^{-1})$. Note that in
(\ref{ber_bpsk_int}) we exchange the order of integrations due to
the positiveness of the integrand and the finiteness of the integral
\cite[p.457, C.9]{chaudhrybook01}. Like in (\ref{ber_dpsk_asymp}),
due to Tauberian theorem (Theorem \ref{tbr_th}) we have $G(\rho)=
\int_0^{\infty} \exp( -\rho z) dF(z) \sim \Gamma(d+1) F(\rho^{-1})$
as $\rho \rightarrow \infty$ and has variation exponent $-d$ at
$\infty$, then
\begin{equation}
\begin{split}
& \lim_{\rho \rightarrow \infty} \frac{ {\rm \overline{P}
_{e,AS1c}}(\rho)}{ F(\rho^{-1})}= \lim_{\rho \rightarrow \infty}
\frac{G(\rho)}{ F(\rho^{-1})} \cdot \lim_{\rho \rightarrow \infty}
\frac{ {\rm \overline{P} _{e,AS1c}}(\rho)}{G(\rho)}= \Gamma(d+1)
\lim_{\rho \rightarrow \infty} \frac{1}{G(\rho)} \int_0^{\theta_{\rm
t}} g_2(\theta) G\left( \frac{\rho}{g_1(\theta)} \right) d \theta=\\
& \Gamma(d+1) \int_0^{\theta_{\rm t}} \lim_{\rho \rightarrow \infty}
g_2(\theta) \frac{G(\rho/g_1(\theta))}{G(\rho)} d \theta=
\Gamma(d+1) \int_0^{\theta_{\rm t}} g_2(\theta) g_1^d(\theta) d
\theta
\end{split}
\end{equation} \vskip -4mm
\hspace{-5mm}where we changed the order of limit and integral in the
third equality based on the uniform convergence condition following
from \cite[Theorem 1.5.2]{binghambook89}, and consequently we obtain
\vskip -4mm
\goodbreak \begin{equation} \label{ber_bpsk_asymp}
{\rm \overline{P}_{e,AS1c}}(\rho) \sim \Gamma(d+1) \left(
\int_0^{\theta_{\rm t}} g_2(\theta) g_1^d(\theta) d \theta \right)
F(\rho^{-1})
\end{equation} \vskip -3mm
\hspace{-5mm}as $\rho \rightarrow \infty$. Like for
(\ref{ber_dpsk_asymp}), the converse of the result in
(\ref{ber_bpsk_asymp}) (from average error rate to asymptotic CDF)
is obtainable through the substitution $\rho= z^{-1}$. For many
practical cases, the closed-form expression of the integral
$\int_0^{\theta_{\rm t}} g_2(\theta) g_1^d(\theta) d \theta$ as a
function of $d$ can be worked out without an integral. For example,
by knowing the instantaneous error rates discussed in Section
\ref{chnl_sys_mdl}, we can obtain ${\rm \overline{P}_{e,BPSK}}(\rho)
\sim \Gamma(d+1/2) F(\rho^{-1})/(2 \sqrt{\pi})$ as $\rho \rightarrow
\infty$ for BPSK.

\end{spacing}

\vskip 1.5mm

In summary, for the two cases analyzed above, the asymptotic error
rate expressions in (\ref{ber_dpsk_asymp}) and
(\ref{ber_bpsk_asymp}) are given by $C_1 F(C_2 \rho^{-1}) \sim C_1
C_2^d F(\rho^{-1})$ with constants $C_1$ and $C_2$. Knowing that
outage event occurs when the instantaneous SNR $\rho z$ falls below
certain threshold, and since $F(C_2 \rho^{-1})$ is the probability
that $\rho z$ falls below $C_2$, the asymptotic error rate
expression $C_1 F(C_2 \rho^{-1})$ represents a scaled version of the
outage probability. Furthermore, $C_1$ and $C_2$ depend only on the
system specifications ($\alpha$, $\beta$, $\theta_{\rm t}$,
$g_1(\theta)$ and $g_2(\theta)$) and the variation exponent $d$ of
$F(z)$. As a result, we only need to know $d$ in addition to $F(z)$
to obtain the asymptotic error rate and do not need to express
$F(z)$ or the corresponding PDF $f(z)$ in series expansion form. For
simple practical channels like Nakagami-$m$, $d=m$ can be seen by
inspection of $F(z)$ or $f(z)$. For many channel distributions, the
diversity order can also be obtained by solving $\lim_{z \rightarrow
0} F(\tau z)/F(z)= \tau^d$ or $\lim_{z \rightarrow 0} f(\tau
z)/f(z)= \tau^{d-1}$ using L'H\^{o}pital's rule, or approximated
numerically by evaluating $\lim_{z \rightarrow 0} \log F(z)/\log z=
d$ or $\lim_{z \rightarrow 0} \log f(z)/\log z= d-1$. Therefore, the
asymptotic error rate is related to the channel distribution in many
useful ways which complement the PDF-based approach in
\cite{wang03}. In Figures \ref{bdpsk_m2_new} and \ref{bdpsk_m3_new},
we compare the BERs of DPSK and BPSK under Nakagami-$m$ fading
obtained through Monte Carlo simulation, with their approximations
given by (\ref{ber_dpsk_asymp}) and (\ref{ber_bpsk_asymp}). We
observe that the results given by (\ref{ber_dpsk_asymp}) and
(\ref{ber_bpsk_asymp}) match their corresponding simulation results
within $0.1$ dB at error rates of $10^{-6}$. In addition, it can be
seen that our approach gives better approximation to the Monte Carlo
simulation results than \cite{wang03}, most noticeably for moderate
values of average SNR.

\vskip 1.5mm

There are some practical cases of instantaneous error rate ${\rm
P_e} (\rho z)$, which do not fit {\bf AS1c} but can be expressed as
linear combinations of exponential-mixture functions of $\rho z$ in
the form $\int_0^{\theta_{\rm t}} g_2(\theta) \exp(-\rho
z/g_1(\theta)) d \theta$ given in {\bf AS1c}, such as the
instantaneous BER of Gray-coded $M$-PSK \cite[Section
8.1.1.3]{simonbook00}. There are also practical ${\rm P_e} (\rho z)$
which fits {\bf AS1c}, but can be expressed as a linear combination
of exponential-mixture functions, each with a much simpler
$g_2(\theta)$ function than that of ${\rm P_e} (\rho z)$ itself,
such as the instantaneous SER of square $M$-QAM. For these cases,
asymptotic characterization of ${\rm \overline{P}_{e}}(\rho)$
through linear combination becomes necessary. Specifically, for an
instantaneous error rate given by ${\rm P_e} (\rho z)= \sum_{j=1}^J
a_j {\rm P_e}_{,j} (\rho z)$ where $\{ a_j \}_{j=1}^J$ are constants
and $\{ {\rm P_e}_{,j} (\rho z) \}_{j=1}^J$ are (simpler)
expressions satisfying {\bf AS1b} or {\bf AS1c}, we first determine
${\rm E}[{\rm P_e}_{,j} (\rho z)] \sim C_{1j} F(C_{2j} \rho^{-1})$
using the methods in this section. If $\{ a_j \}_{j=1}^J$ are all
positive, we have ${\rm \overline{P}_{e}}(\rho) \sim \sum_{j=1}^J
a_j C_{1j} F(C_{2j} \rho^{-1})$; if any of $\{ a_j \}_{j=1}^J$ is
negative, ${\rm \overline{P}_{e}}(\rho) \sim \sum_{j=1}^J a_j C_{1j}
C_{2j}^d F(\rho^{-1})$ can be established. We omit the derivations
due to lack of space.

\section{Asymptotic Error Rate Performance under Diversity
Combining} \label{div_comb}

In this section, we establish an extension and application of the
results derived in Section \ref{ber_psk}, by analyzing the
asymptotic error rate performance at high average SNR for several
diversity combining schemes. The results are especially useful when
the diversity branches have non-identical fading distributions and
when the average error rate expression is not available in closed
form.

We consider a system in which the receiver has the channel state
information (CSI), and employs $N$ diversity branches with
independent but not necessarily identical fading distributions.
Particularly, the $n$-th branch is assumed to have a channel power
gain $z_n$ with CDF $F_n(z)=z^{d_n} l_n(z)$ where $l_n(z)$ is slowly
varying at $0$. We derive asymptotic expressions for E$[\beta
e^{-\alpha \rho z_{\rm c}}]$ and E$[\int_0^{\theta_{\rm t}}
g_2(\theta) \exp(-\rho z/g_1(\theta)) d \theta]$ for large $\rho$ in
terms of the system specifications ($\alpha$, $\beta$, $\theta_{\rm
t}$, $g_1(\theta)$ and $g_2(\theta)$), $\rho$, $\{ d_n \}_{n=1}^N$
and $\{ l_n(\cdot) \}_{n=1}^N$, where $z_{\rm c}$ is the channel
power gain after combining. In order to address this problem with a
unified approach, we first determine the asymptotic CDF of $z_{\rm
c}$ near the origin in the form $F_{\rm c}(z) \sim z^{d_{\rm c}}
l_{\rm c}(z)$ with $l_{\rm c}(z)$ being slowly varying at $0$ for
each specific diversity combining scheme by expressing $d_{\rm c}$
and $l_{\rm c}(\cdot)$ in terms of $\{ d_n \}_{n=1}^N$ and $\{
l_n(\cdot) \}_{n=1}^N$ respectively. Moreover, we express the CDF of
the combined channel $F_{\rm c}(z)$ in terms of $\{ d_n \}_{n=1}^N$
and $\{ F_n(\cdot) \}_{n=1}^N$. This will lead to characterizing
E$[\beta e^{-\alpha \rho z_{\rm c}}]$ and E$[\int_0^{\theta_{\rm t}}
g_2(\theta) \exp(-\rho z/g_1(\theta)) d \theta]$ in terms of the
system specifications, $\rho$, $d_{\rm c}$ and $l_{\rm c}(\cdot)$
(and alternatively $F_{\rm c}(\cdot)$) using the same method as in
Section \ref{ber_psk}.

\begin{spacing}{1.6}
\vspace{-3mm}

\subsection{Maximum Ratio Combining (MRC)} \label{cdf_mrc}

For MRC with independent branches as well as a number of cooperative
relay systems \cite{anghel04}, the channel power gain can be
expressed as the sum of independent random variables: $z_{\rm c}=
\sum_{n=1}^N z_n$. Define $\mathcal{L}_n(s):= \int_0^{\infty} e^{-s
z} dF_n(z)$ and $\mathcal{L}_{\rm c}(s):= \int_0^{\infty} e^{-s z}
dF_{\rm c}(z)$ to be the Laplace-Stieltjes transforms of the
respective distributions. Based on the convolution property of
Laplace transform, we have $\mathcal{L}_{\rm c}(s)= \prod_{n=1}^N
\mathcal{L}_n(s)$. On the other hand, it follows from Tauberian
theorem (Theorem \ref{tbr_th}) that $\mathcal{L}_n(s) \sim
\Gamma(d_n+1) s^{-d_n} l_n(s^{-1})$ as $s \rightarrow \infty$, and
therefore $\mathcal{L}_{\rm c}(s) \sim \prod_{n=1}^N \Gamma(d_n+1)
\cdot s^{-\sum_{n=1}^N d_n} \prod_{n=1}^N l_n(s^{-1})$. It is easy
to verify that $\prod_{n=1}^N \Gamma(d_n+1) \cdot \prod_{n=1}^N
l_n(s^{-1})$ is slowly varying at $\infty$ as a function of $s$. It
follows from Tauberian theorem (Theorem \ref{tbr_th}) that \vskip
-4mm
\goodbreak \begin{equation} \label{cdf_asympt_mrc}
F_{\rm c}(z) \sim \left[ \Gamma \left( \sum_{n=1}^N d_n+1 \right)
\right]^{-1} \prod_{n=1}^N \Gamma(d_n+1) \cdot \prod_{n=1}^N F_n(z)
\end{equation}
near the origin, where we have simplified $z^{\sum_{n=1}^N d_n}
\prod_{n=1}^N l_n(z)$ as $\prod_{n=1}^N F_n(z)$, and the first
factor on the right hand side is due to the variation exponent
$-\sum_{n=1}^N d_n$ of $\mathcal{L}_{\rm c}(s)$ at $\infty$. We will
relate the outage of the combined channel in (\ref{cdf_asympt_mrc})
to the average error rates in Section \ref{asympt_div_comb}.

\vspace{-3mm}

\subsection{Equal Gain Combining (EGC)}

For EGC with independent branches we have $z_{\rm c}= (\sum_{n=1}^N
\sqrt{z_n})^2/N$. Define $y_n=\sqrt{z_n}$ and $y_{\rm c}=
\sum_{n=1}^N y_n$. It is easy to derive that $y_n$ has a CDF given
by $G_n(y)= y^{2 d_n} l(y^2)$, and show that $l(y^2)$ is slowly
varying as a function of $y$ at $0$. Consequently, by using the same
method as in Section \ref{cdf_mrc}, we can derive the asymptotic CDF
$G_{\rm c}(y) \sim [ \Gamma ( 2\sum_{n=1}^N d_n+1 ) ]^{-1}
\prod_{n=1}^N \Gamma(2d_n+1) \cdot y^{2\sum_{n=1}^N d_n}
\prod_{n=1}^N l_n(y^2)$ of $y_{\rm c}$ near $y_{\rm c}=0$, then
using $z_{\rm c}= y_{\rm c}^2/N$, we get the asymptotic CDF \vskip
-4mm
\goodbreak \begin{equation} \label{cdf_asympt_egc}
F_{\rm c}(z) \sim \left[ \Gamma \left( 2\sum_{n=1}^N d_n+1 \right)
\right]^{-1} \prod_{n=1}^N \Gamma(2d_n+1) \cdot \prod_{n=1}^N
F_n(Nz)
\end{equation}
of $z_{\rm c}$ near the origin, which will be related to the
asymptotic error rates in Section \ref{asympt_div_comb}.

\vspace{-3mm}

\subsection{Selection Combining (SC)}

For SC with independent branches we have $z_{\rm c}=
\max_{n=1,2,...,N}\{ z_n \}$, then it follows that \vskip -4mm
\begin{equation} \label{cdf_asympt_sc}
F_{\rm c}(z)= \prod_{n=1}^N F_n(z)= z^{\sum_{n=1}^N d_n}
\prod_{n=1}^N l_n(z) ,
\end{equation} \vskip -2mm
\hspace{-5mm}which will be used to derive the asymptotic error rates
next.

\vspace{-3mm}

\subsection{Asymptotic Error Rate Expressions} \label{asympt_div_comb}

For all three cases analyzed above, we can verify that $F_{\rm
c}(z)$ is regularly varying at $0$ with exponent $d_{\rm
c}=\sum_{n=1}^N d_n>0$, since the same holds for $F_n(z)$ with
exponent $d_n>0$. In Section \ref{ber_psk} we established the
asymptotic average error rates in terms of the CDF $F(z)$ in
closed-form. We can directly apply these results here to the
combined CDFs of the respective diversity combining schemes, and
obtain \vskip -4mm
\begin{equation} \label{ser_asympt_expf}
{\rm E}[\beta e^{-\alpha \rho z_{\rm c}}] \sim \beta \Gamma(d_{\rm
c}+1) F_{\rm c}(\alpha^{-1} \rho^{-1})
\end{equation}
\begin{equation} \label{ser_asympt_qf}
{\rm E} \left[ \int_0^{\theta_{\rm t}} g_2(\theta) \exp \left(
-\frac{\rho z}{g_1(\theta)} \right) d \theta \right] \sim \left(
\Gamma(d_{\rm c}+1) \int_0^{\theta_{\rm t}} g_2(\theta) g_1^{d_{\rm
c}}(\theta) d \theta \right) F_{\rm c}(\rho^{-1})
\end{equation}
as $\rho \rightarrow \infty$, where $d_{\rm c}$ is the same for all
three combining schemes, and $F_{\rm c}(\cdot)$ is given by
(\ref{cdf_asympt_mrc}), (\ref{cdf_asympt_egc}) or
(\ref{cdf_asympt_sc}) for MRC, EGC and SC respectively. We have thus
established a unified approach to evaluate the asymptotic error rate
at high average SNR for MRC, EGC and SC, given the conditions that
the channel power gain of each branch has a CDF which is regularly
varying at $0$ and the instantaneous error rate can be expressed as
linear combination of exponential or exponential-mixture functions.
Figures \ref{bpsk_div_comb1} and \ref{bpsk_div_comb2} show the BERs
of BPSK under different diversity combining schemes obtained through
Monte Carlo simulation, as well as their approximations obtained
through (\ref{ser_asympt_qf}) together with (\ref{cdf_asympt_mrc}),
(\ref{cdf_asympt_egc}) and (\ref{cdf_asympt_sc}). The accuracy of
our approach is corroborated by the closeness (within $0.2$ dB at
error rate of $10^{-6}$) between the simulation results and
analytical asymptotic approximations. Like observed in Figures
\ref{bdpsk_m2_new} and \ref{bdpsk_m3_new}, our approach gives better
approximation than \cite{wang03} to the simulation results,
especially when the average SNR is not significantly high.

\end{spacing}

In addition to the simple relation between the asymptotic error rate
and the channel distribution, it can be seen that our approach based
on Tauberian theorem enables analysis of the performance of a
communication system involving multiple channels, in which the
regular variation property of the distribution of the overall
effective channel is inherited from the channel distributions
corresponding to the constituent parts of the system. Also, the
asymptotic CDFs given by (\ref{cdf_asympt_mrc}) and
(\ref{cdf_asympt_egc}) can make high-SNR substitutes of the
numerical inversion method in \cite{ko00} to compute outages in
fading channels, as long as different diversity branches are
independent.

\section{Conclusions} \label{concl}

In this paper, we investigate the relationship between the high-SNR
asymptotic average error rate and outage by establishing that the
instantaneous error rate being upper bounded by an exponential
function of the instantaneous SNR is sufficient for outage events to
dominate error rate performance. We do this by establishing the
regular/rapid variation of the CDF of the channel power gain with
exponent $d$ at $0$ to be a necessary and sufficient condition for
the error rate to exhibit a diversity order of $d$. This implies the
equivalence between the error-rate-based diversity order and the
outage-based diversity order, thereby characterizing the conditions
under which the outage event dominates high-SNR error rate
performance. For the case of finite $d$, the constant SNR offset
between different communication systems over the same fading channel
with exponentially bounded instantaneous error rates is revealed.
For instantaneous error rates given by a mixture of exponential
functions of the instantaneous SNR ({\bf AS1c}), we express the
high-SNR asymptotic error rates directly as multiples of the outage
probability (channel CDF for small arguments). Furthermore, we
derive the asymptotic error rate for instantaneous error rate
satisfying {\bf AS1c}, considering different diversity combining
schemes with the CDF of the channel power gain of each diversity
branch being regularly varying at $0$. All these results exhibit the
convenience and effectiveness of Tauberian theorem as a tool to
analyze the asymptotic error rate performance related to diversity
under fading, thus conveniently generalizing and complementing the
PDF-based approach in \cite{wang03}. Numerical results show that our
approach gives more accurate approximations than \cite{wang03}.

\renewcommand\thesection{Appendix \Alph{section}}
\setcounter{section}{0}
%\appendices

\vspace{-3mm}

\begin{spacing}{1.6}

\section{Proof of Proposition \ref{prop_sl1}} \label{prop_sl1_proof}

We prove this for $H(x)$ being slowly varying at $\infty$ only,
since the case of slow variation at $0$ can be obtained by
considering $H(x^{-1})$. By definition we have $\lim_{x \rightarrow
\infty} H(tx)/H(x)= 1$ for $t>0$, which implies that for any
$\epsilon_1>0$, there exists sufficiently large $s_1$ such that for
all $x>s_1$, $|H(tx)/H(x)-1|<\epsilon_1$. Without loss of generality
we assume $t>1$, and since $0 \leq h_1 \leq H(x) \leq h_2 < \infty$
for sufficiently large $x$ (say $x>s_2$), it follows that for any
$\epsilon_2= \epsilon_1 h_2$, there exists sufficiently large
$s_3=\max \{ s_1,s_2 \}$ such that for all $x_1>s_3$ and $x_2>s_3$,
$|H(x_2)-H(x_1)|< \epsilon_2$. This follows from multiplying
$|H(tx)/H(x)-1|<\epsilon_1$ by $H(x)$. Therefore the Cauchy criteria
for the existence of limit is satisfied, and $H(x)$ should converge
to a constant in $[h_1,h_2]$ as $x \rightarrow \infty$.

\vspace{-2mm}

\section{Proof of Proposition \ref{prop_def1}} \label{prop_def1_proof}

We use Theorem \ref{rep_thm} to prove the relation between
(\ref{divdef2}) and (\ref{divdef3}), starting from the claim that
(\ref{divdef2}) implies (\ref{divdef3}). For $0<d \leq \infty$,
(\ref{divdef2}) implies that ${\rm \overline{P}_e}(\rho)$ has a
variation exponent $-d$ at $\infty$, and thus can be represented as
\vskip -5mm
\goodbreak \begin{equation} \label{var_rep1}
{\rm \overline{P}_e}(\rho)= c(\rho) \exp \left\{ \int_a^{\rho}
\frac{\epsilon(u)}{u} du \right\}
\end{equation} \vskip -2mm
\hspace{-5mm}for some $a>0$, $c(\rho) \rightarrow c \in (0,\infty)$
and $\epsilon(\rho) \rightarrow -d$ as $\rho \rightarrow \infty$,
using Theorem \ref{rep_thm}. Therefore \vskip -6mm
\begin{equation} \label{lim1}
\begin{split}
\lim_{\rho \rightarrow \infty} \frac{\log {\rm
\overline{P}_e}(\rho)}{\log \rho} &= \lim_{\rho \rightarrow \infty}
\frac{\log c(\rho)+ \int_a^{\rho} \frac{\epsilon(u)}{u} du}{\log
\rho}= \lim_{\rho \rightarrow \infty} \frac{\int_a^{\rho}
\frac{\epsilon(u)}{u}
du}{\log \rho}\\
&= \lim_{\rho \rightarrow \infty} \frac{\epsilon(\rho)}{\rho}
(\rho^{-1})^{-1}= \lim_{\rho \rightarrow \infty} \epsilon(\rho)=-d
\end{split}
\end{equation} \vskip -2mm
\hspace{-5mm}where in the second equality we used $\epsilon(\rho)
\rightarrow -d$, and in the third equality L'H\^{o}pital's rule. We
have thus shown that (\ref{divdef2}) implies (\ref{divdef3}).

We next prove that (\ref{divdef3}) implies (\ref{divdef2}) given the
mild additional assumption that the limit in (\ref{divdef2}) exists.
The existence of the limit in (\ref{divdef2}) implies that
$\lim_{\rho \rightarrow \infty} {\rm \overline{P}_e}(\tau \rho)/
{\rm \overline{P}_e}(\rho)= \tau^{m_1}$ for some $m_1 \in [-\infty,
\infty]$ \cite[Lemma 1, p.275]{fellerbook71}. We would like to show
that $m_1=-d$. Clearly, ${\rm \overline{P}_e}(\rho)$ can be
represented by (\ref{var_rep1}) for some $a>0$, $c(\rho) \rightarrow
c \in (0,\infty)$ and $\epsilon(\rho) \rightarrow m_1$ as $\rho
\rightarrow \infty$. Similar to (\ref{lim1}), it can be derived that
$\lim_{\rho \rightarrow \infty} (\log {\rm
\overline{P}_e}(\rho))/(\log \rho)=m_1$. Given the condition that
(\ref{divdef3}) holds, it follows that $m_1=-d$, and thus
(\ref{divdef2}) also holds.

To show that (\ref{divdef1}) is a special case of (\ref{divdef2}),
from (\ref{divdef1}) we get ${\rm \overline{P}_e}(\tau \rho)= G^{-d}
\tau^{-d} \rho^{-d}+ o(\rho^{-d})$, then
\begin{equation}
\lim_{\rho \rightarrow \infty} \frac{{\rm \overline{P}_e}(\tau
\rho)} {{\rm \overline{P}_e}(\rho)}= \lim_{\rho \rightarrow \infty}
\frac{G^{-d} \tau^{-d}+ o(\rho^{-d})/ \rho^{-d}}{G^{-d}+
o(\rho^{-d})/ \rho^{-d}}= \tau^{-d}
\end{equation} \vskip -2mm
\hspace{-5mm}i.e. (\ref{divdef2}) holds. An example for why
(\ref{divdef2}) does not imply (\ref{divdef1}) can be seen for the
case ${\rm \overline{P}_e}(\rho) \sim c_1 (\log \rho)^{c_2}
\rho^{-d}$ with $c_1,c_2>0$, as brought up in
\cite{taherzadeh07a,taherzadeh10}. Hence (\ref{divdef2}) is more
general than (\ref{divdef1}).

\vspace{-3mm}

\section{Proof of Theorem \ref{thm_div1}} \label{thm_div1_proof}

We begin with assuming the regular/rapid variation of $F(z)$ with
exponent $d$ at $0$, and proving the regular/rapid variation of
${\rm \overline{P}_e}(\rho)$ with exponent $-d$ at $\infty$, by
analyzing the upper and lower bounds of ${\rm
\overline{P}_e}(\rho)$. Due to {\bf AS1a} we have \vskip -5mm
\goodbreak \begin{equation} \label{ber_ub}
{\rm \overline{P}_e}(\rho)= \int_0^{\infty} {\rm P_e} (\rho z) dF(z)
\leq \int_0^{\infty} \beta e^{-\alpha \rho z} dF(z)= {\rm
\overline{P}_{e,UB}}(\rho)
\end{equation} \vskip -3mm
\hspace{-5mm}and since ${\rm P_e} (\cdot)$ is non-negative and
monotonically decreasing, \vskip -5mm
\goodbreak \begin{equation} \label{ber_lb}
{\rm \overline{P}_e}(\rho) \geq \int_0^{\eta/\rho} {\rm P_e} (\rho
z) dF(z) \geq \int_0^{\eta/\rho} {\rm P_e} (\eta) dF(z)= {\rm P_e}
(\eta) F(\eta/\rho)= {\rm \overline{P}_{e,LB}}(\rho)
\end{equation} \vskip -3mm
\hspace{-5mm}for any constant $\eta>0$. It follows from
(\ref{ber_ub}) and (\ref{ber_lb}) that \vskip -3mm
\begin{equation} \label{div_bound}
\frac{\log {\rm \overline{P}_{e,LB}}(\rho)} {\log \rho} \leq
\frac{\log {\rm \overline{P}_e}(\rho)} {\log \rho} \leq \frac{\log
{\rm \overline{P}_{e,UB}}(\rho)} {\log \rho}
\end{equation} \vskip -2mm

For $d=\infty$, it can be easily seen that ${\rm
\overline{P}_{e,LB}}(\rho)$ has variation exponent $-\infty$ at
$\rho= \infty$ and thus satisfies (\ref{divdef3}). Also, based on
Tauberian theorem (Theorem \ref{tbr_th}) with the correspondences
$F(z) \equiv H(x)$ and $\alpha \rho \equiv s$, ${\rm
\overline{P}_{e,UB}}(\rho)$ has variation exponent $-\infty$ at
$\rho= \infty$ and thus satisfies (\ref{divdef3}). By taking the
limit of (\ref{div_bound}) as $\rho \rightarrow \infty$ and using
Squeeze theorem (a well-known theorem stating that if $\lim_{x
\rightarrow a} H_1(x)= \lim_{x \rightarrow a} H_3(x)= h$ and $H_1(x)
\leq H_2(x) \leq H_3(x)$ in some neighborhood of $a$, then $\lim_{x
\rightarrow a} H_2(x)= h$), it follows that ${\rm
\overline{P}_e}(\rho)$ should satisfy (\ref{divdef3}).

For $0<d< \infty$, let $F(z)= z^d l(z)$ with $l(z)$ being slowly
varying at $0$. Based on (\ref{ber_ub}) and Tauberian theorem
(Theorem \ref{tbr_th}) we have \vskip -5mm
\begin{equation} \label{ber_ub2}
{\rm \overline{P}_{e,UB}}(\rho) \sim \beta \Gamma(d+1) (\alpha
\rho)^{-d} l(\alpha^{-1} \rho^{-1}) .
\end{equation} \vskip -3mm
\hspace{-5mm}Also, it follows from (\ref{ber_lb}) that \vskip -5mm
\begin{equation} \label{ber_lb2}
{\rm \overline{P}_{e,LB}}(\rho)= \eta^d \rho^{-d} {\rm P_e} (\eta)
l(\eta/\rho) .
\end{equation} \vskip -3mm
\hspace{-5mm}It is easy to verify that both ${\rm
\overline{P}_{e,UB}}(\rho)$ and ${\rm \overline{P}_{e,LB}}(\rho)$
satisfy (\ref{divdef2}), thus also satisfy (\ref{divdef3}) due to
Proposition \ref{prop_def1}, then ${\rm \overline{P}_e}(\rho)$
should satisfy (\ref{divdef3}) due to (\ref{div_bound}). This
completes the sufficiency part.

We now show ${\rm \overline{P}_e}(\rho)$ being regularly/rapidly
varying with exponent $-d$ at $\infty$ implies $F(z)$ being
regularly/rapidly varying with exponent $d$ at $0$ provided that
$F(z)$ is a Karamata function (at $0$). Based on \cite[Lemma 1,
p.275]{fellerbook71}, it can be confirmed that $\lim_{z \rightarrow
0} F(\tau z)/F(z)= \tau^{m_2}$ with $m_2 \in (0,\infty]$ being the
variation exponent of $F(z)$. Like in the sufficiency part of the
proof, it can be derived that both ${\rm \overline{P}_{e,UB}}(\rho)$
and ${\rm \overline{P}_{e,LB}}(\rho)$ have variation exponent
$-m_2$, then (\ref{div_bound}) becomes $-m_2 \leq -d \leq -m_2$,
leading to $m_2=d$. Therefore $F(z)$ is regularly/rapidly varying
with exponent $d$ at $0$.

\end{spacing}

\vspace{-3mm}

\section{Proof of Theorem \ref{thm_asympt1}} \label{thm_asympt1_proof}

We first derive the asymptotic expression for ${\rm
\overline{P}_e}(\rho)$ assuming {\bf AS2b}, $F(z)=z^d l(z)$. Theorem
\ref{thm_div1} implies that ${\rm \overline{P}_e}(\rho)= \rho^{-d}
r(\rho)$ where $r(\rho)$ is slowly varying at $\infty$. Let
$l_1(\rho):= {\rm \overline{P}_e}(\rho)/(\rho^{-d} l(\rho^{-1}))$.
It is easy to show that $l_1(\rho)$ is slowly varying at $\infty$.
On the other hand, from (\ref{ber_ub}) and (\ref{ber_lb}) we have
${\rm \overline{P}_{e,LB}}(\rho)/(\rho^{-d} l(\rho^{-1})) \leq
l_1(\rho) \leq {\rm \overline{P}_{e,UB}}(\rho)/(\rho^{-d}
l(\rho^{-1}))$; also, based on (\ref{ber_ub2}) and (\ref{ber_lb2})
as well as the definition of slow variation, we have $\lim_{\rho
\rightarrow \infty} {\rm \overline{P}_{e,UB}}(\rho)/(\rho^{-d}
l(\rho^{-1}))= \beta \alpha^{-d} \Gamma(d+1)$ and $\lim_{\rho
\rightarrow \infty} {\rm \overline{P}_{e,LB}}(\rho)/(\rho^{-d}
l(\rho^{-1}))= {\rm P_e}(\eta) \eta^d$ respectively. It follows that
for sufficiently large $\rho$, $l_1(\rho)$ is bounded between two
finite positive constants. Based on Proposition \ref{prop_sl1},
$l_1(\rho)$ converges to a finite positive constant as $\rho
\rightarrow \infty$, say $l_1(\rho) \sim c_1 \in (0,\infty)$, and
hence ${\rm \overline{P}_e}(\rho) \sim c_1 \rho^{-d} l(\rho^{-1})$.

Consider now the derivation for the asymptotic expression of $F(z)$
assuming ${\rm \overline{P}_e}(\rho)= \rho^{-d} r(\rho)$ and that
$\lim_{z \rightarrow 0} F(\tau z)/F(z)$ exists for $\tau>0$. It
follows directly from Theorem \ref{thm_div1} that $F(z)$ must be in
the form $F(z)= z^d l(z)$ with $l(z)$ slowly varying at $0$. Based
on the sufficiency part of the proof (i.e. the asymptotic expression
of ${\rm \overline{P}_e}(\rho)$ given $F(z)=z^d l(z)$), we must have
${\rm \overline{P}_e}(\rho)/(\rho^{-d} l(\rho^{-1}))=
r(\rho)/l(\rho^{-1}) \sim c_1$ as $\rho \rightarrow \infty$, with
constant $c_1 \in (0,\infty)$. Consequently, $l(z)/r(z^{-1}) \sim
c_1^{-1}$ as $z \rightarrow 0$, and $F(z) \sim c_2 z^d r(z^{-1})$
where $c_2= c_1^{-1} \in (0,\infty)$ is a constant.

\vspace{-2mm}

\begin{spacing}{1.4}

\bibliographystyle{IEEEtran}
\bibliography{reflist}

\end{spacing}

\begin{figure}[!ht]
\begin{minipage}{1.0\textwidth}
\begin{center}
\includegraphics[height=9cm,keepaspectratio]{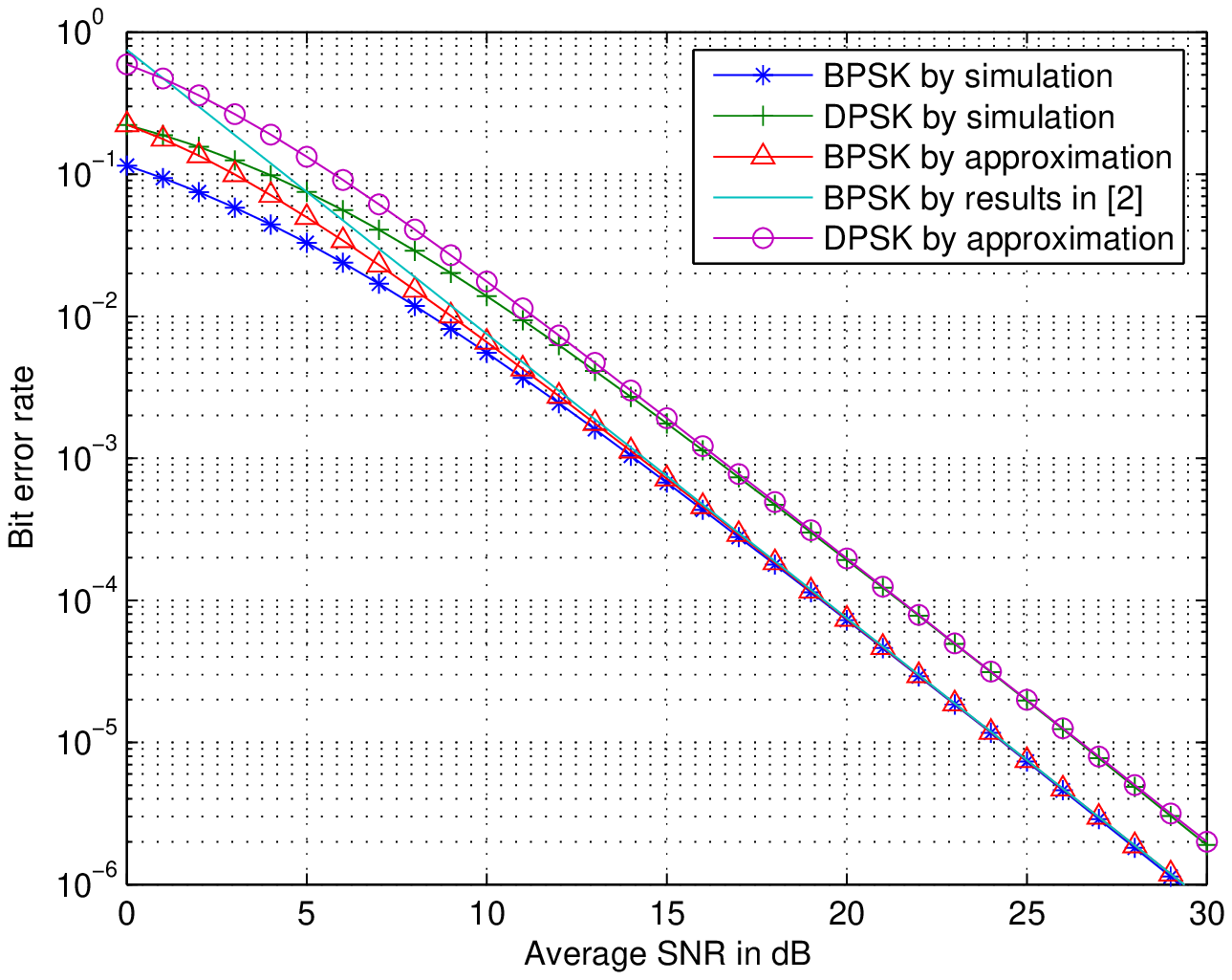}
\caption{BERs of BPSK and DPSK under Nakagami-$m$ fading with $m=2$}
\label{bdpsk_m2_new}
\end{center}
\end{minipage}
\end{figure}

\begin{figure}[!ht]
\begin{minipage}{1.0\textwidth}
\begin{center}
\includegraphics[height=9cm,keepaspectratio]{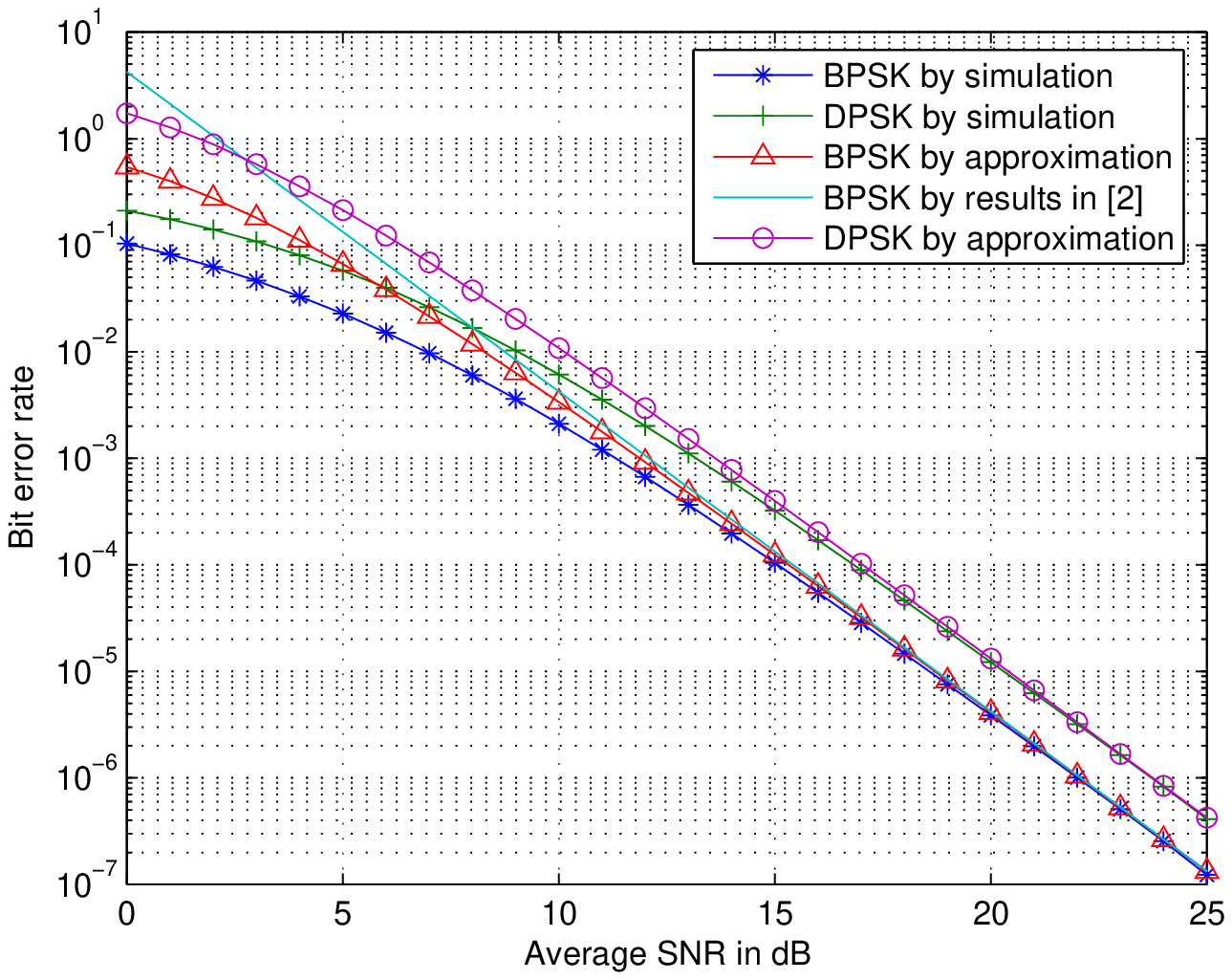}
\caption{BERs of BPSK and DPSK under Nakagami-$m$ fading with $m=3$}
\label{bdpsk_m3_new}
\end{center}
\end{minipage}
\end{figure}

\begin{figure}[!ht]
\begin{minipage}{1.0\textwidth}
\begin{center}
\includegraphics[height=9cm,keepaspectratio]{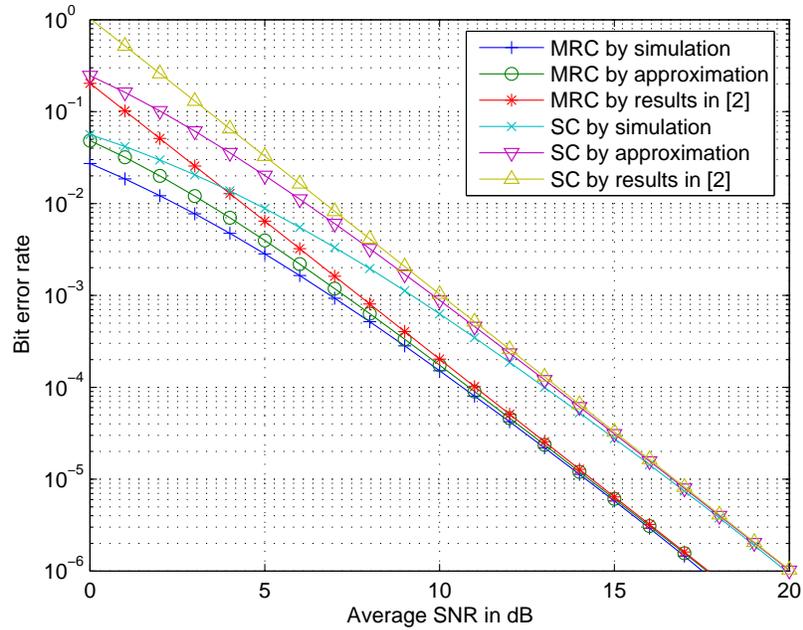}
\caption{BERs of BPSK under MRC and SC involving three Nakagami-$m$
fading branches with $m=0.5$, $1$ and $1.5$} \label{bpsk_div_comb1}
\end{center}
\end{minipage}
\end{figure}

\begin{figure}[!ht]
\begin{minipage}{1.0\textwidth}
\begin{center}
\includegraphics[height=9cm,keepaspectratio]{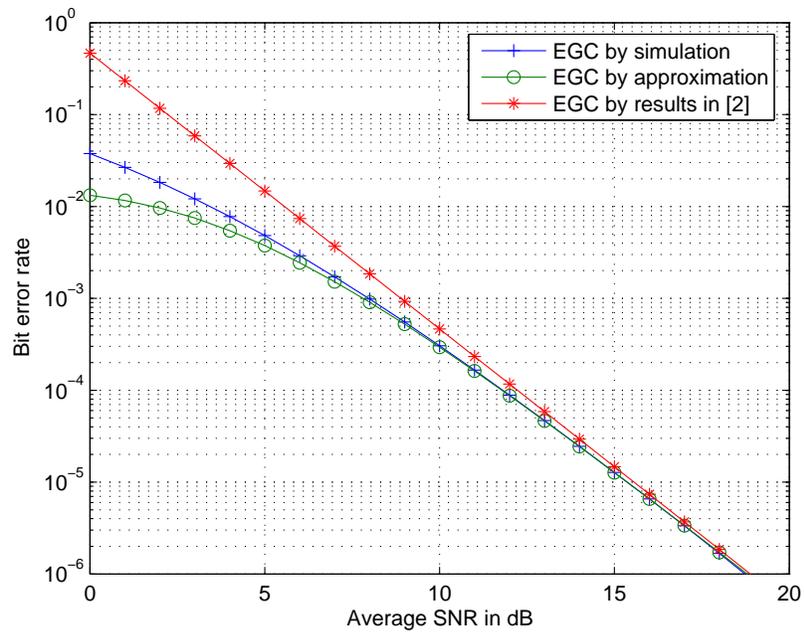}
\caption{BERs of BPSK under EGC involving three Nakagami-$m$ fading
branches with $m=0.5$, $1$ and $1.5$} \label{bpsk_div_comb2}
\end{center}
\end{minipage}
\end{figure}

\end{document}